\definecolor{MyBlue}{HTML}{210cac}
\definecolor{MyCommentColor}{HTML}{E82929}
\definecolor{MyCiteColor}{HTML}{0099FF}
\definecolor{MyRed}{HTML}{3E186A}
\newcommand{\eqnum}{\refstepcounter{equation}\textup{\tagform@{\theequation}}}
\newtheorem{theorem}{Theorem}
\numberwithin{theorem}{section}
\newtheorem{proposition}[theorem]{Proposition}
\newtheorem{lemma}[theorem]{Lemma}
\newtheorem{corollary}[theorem]{Corollary}
\newtheorem{definition}[theorem]{Definition}
\newtheorem{remark}[theorem]{Remark}
\newtheorem{example}[theorem]{Example}
\newcommand{\RR}{\mathbb{R}}
\newcommand{\CC}{\mathbb{C}}
\newcommand{\ZZ}{\mathbb{Z}}
 \date{}
\begin{document}

\title{
Families of Toric Chemical Reaction Networks}
\author{Michael F.~Adamer and Martin Helmer}
\maketitle
\begin{abstract}
We study families of chemical reaction networks whose positive steady states are toric, and therefore can be parameterized by monomials. Families are constructed algorithmically from a core network; we show that if a family member is multistationary, then so are all subsequent networks in the family. Further, we address the questions of model selection and experimental design for families by investigating the algebraic dependencies of the chemical concentrations using matroids. 
Given a family with toric steady states and a constant number of conservation relations, we construct a matroid that encodes important information regarding the steady state behaviour of the entire family. Among other things, this gives necessary conditions for the distinguishability of families of reaction networks with respect to a data set of measured chemical concentrations. We illustrate our results using multi-site phosphorylation networks.
\end{abstract}

\section{Introduction}
Many of the fundamental processes in biological cells can be described by a set of interlinked chemical reactions. Prominent examples of cellular processes regulated via biochemical interactions include immune response \cite{McKeithan1995}, cell signalling \cite{Salazar2009}, cell death \cite{Ferrell1997,Burack1997}, and toxin formation \cite{Lee2016}. For this reason the study of chemical reaction networks forms a central part of algebraic systems biology \cite{Gross2016,Gunawardena2003,Gunawardena2009,Holstein,Yu2018}. One approach focuses on the long term behaviour of networks by investigating their steady states and the relation of the number and stability of steady states to the network structure \cite{Yu2018,Craciun2009,ProcPhos}. In this paper we investigate the positive steady states for algorithmically constructed reaction networks, which we call \textit{families}, for which the positive steady states may be parameterized by monomials (i.e.~polynomials with a single term). Further, we use the algebraic dependencies of the variables representing the chemical concentrations to investigate experimental design and model identification for entire families.

Families of networks are formally defined in Definition \ref{def:recursiveG}. To obtain an intuition for what could be described as a family, we introduce phosphorylation networks \cite{ProcPhos,PerezMillan}. Phosphorylation is a vital signalling process in biochemistry and it is one of the most widely studied protein modifications. During phosphorylation a phosphoryl group ($PO_3^-$) is added to an organic molecule which acts as a substrate. The chemical reaction is catalysed by enzymes and often many phosphoryl groups can be added to the same substrate. This process is known as multisite phosphorylation. There are two limiting mechanisms which have been investigated in the literature and which will serve as the running examples in this paper, namely, distributive phosphorylation and processive phosphorylation. In the distributive system the enzyme unbinds from the substrate after every time a $PO_3^-$ is added and in the processive mechanism the enzyme stays bound until the substrate is fully phosphorylated. The desphosphorylation mechanisms work analogously. The distributive and processive mechanisms for one site substrates follow the same reaction scheme
\begin{align}
\begin{split}
  &E+S_0\rightleftharpoons ES_0 \rightarrow E+S_1,\\
  &F+S_1\rightleftharpoons FS_1 \rightarrow F+S_0.
  \label{netw:Phos1}
  \end{split}
\end{align}
However, on a two-site substrate the mechanisms differ, namely, the distributive mechanism is given by
\begin{align}
  \begin{split}
  E+S_0\rightleftharpoons ES_0 \rightarrow E+S_1 &{\color{OliveGreen}\rightleftharpoons ES_1 \rightarrow E+S_2},\\
  &{\color{OliveGreen}F+S_2\rightleftharpoons FS_2 \rightarrow} F+S_1\rightleftharpoons FS_1\rightarrow F+S_0,
  \label{netw:DistPhos2}
  \end{split}
\end{align}
and the processive system is
\begin{align}
    \begin{split}
        E+S_0\rightleftharpoons ES_0 &{\color{OliveGreen}\rightleftharpoons ES_1 \rightarrow} E+S_2,\\
        F+S_2 &{\color{OliveGreen}\rightleftharpoons FS_2} \rightarrow FS_1 \rightarrow F+S_0.
    \label{netw:ProcPhos2}
    \end{split}
\end{align}
It is clear that the constructions for both the distributive and the processive system can be extended to an $N$-site substrate and that two different procedures are needed to do so. Hence, we can consider these two mechanisms to be in two distinct families of networks. In this paper we study graph theoretic constructions which allows us to rigorously identify families of networks and also construct them from a base graph. Most importantly, though, we investigate which steady state properties are conserved throughout a family.

One of the central goals of this paper is to develop criteria to establish which families have members with multiple positive steady states, so-called multistationary networks. Confirming that a network is multistationary and finding the associated parameter regions is highly nontrivial; a range of different approaches have been applied previously (see \cite{Joshi2015} for a survey). 
In particular, monomial positive steady state parameterizations have proved fruitful due to their relations to toric varieties which are well understood in algebraic geometry \cite{FultonToric}.

Previous work relating to the concept of families in this paper considers so-called atoms of multistationarity, which are the smallest multistationary subnetworks which can induce multistationarity in their parent networks \cite{Joshi2013}. Network properties resulting from the gluing of networks are investigated in \cite{Gross2018}. Other network modifications which preserve or destroy multistationarity are studied in \cite{Banaji2018}. Recent results extend the techniques for identifying multistationarity to highly structured networks \cite{Dickenstein2018} and networks with intermediate species \cite{Sadeghimanesh2018,Feliu2013}.
In this paper we develop a concept of families of networks which unifies the notions of highly structured networks, embedded subnetworks as defined in \cite{Joshi2013} (i.e.~distributive systems), and networks with intermediate species \cite{Feliu2013} (i.e.~processive systems). We show in Theorem \ref{thm:MultiStat} that if a member of a family is multistationary then so are all larger networks obtained by the recursive construction of Definition \ref{def:recursiveG}.


Going beyond multistationarity, we also investigate the necessary conditions for model rejection among members of a family if only limited steady state data is available. In particular, in Section \ref{sec:matroids}, we encode algebraic dependencies between the variables using a combinatorial object called an algebraic matroid \cite{Oxley,Rosen}. From the algebraic matroid we find binomial relations which have to be satisfied by the chemical concentrations at any positive steady state, so-called steady state invariants. The results in Section \ref{Sec:ExpermentialDesign} extend the previous research of \cite{Harrington2012,MacLean2015,Karp2012} and the application of matroids for experimental design presented in \cite{Gross2016,MacLean2015}. Using these previous results, we give novel necessary conditions for the distinguishability of two members of a family of reaction networks with respect to a data set of measured chemical concentrations. Consequently, we can also prescribe which species to measure to be able to reject a family member.

In summary, the biochemical questions we would like to address are:
\begin{enumerate}
\item What conditions are needed to define families of chemical reaction networks and what is the relation between their steady states? (Section \ref{sec:FamiliesOfGraphs})
\item Can we use the family construction in model selection or parameter estimation for the entire family? (Section \ref{sec:matroids})
\item Can we find conditions such that multistationarity of one family member implies multistationarity for all subsequent members? (Section \ref{sec:Multi})
\end{enumerate}
These motivating questions from chemistry translate into the following algebraic questions which we answer using techniques from toric geometry and matroid theory:
\begin{enumerate}
\item What are the relations between the toric varieties defined by recursively constructed reaction graphs?
\item What is the connection between the circuit polynomials of matroids associated to different family members?
\item What is the relation between the positive parts of the steady state varieties of subsequent family members?
\end{enumerate}

This paper is organised as follows. Section \ref{sec:background} introduces chemical reaction networks and relevant definitions from toric geometry and matroid theory. In Section \ref{sec:FamiliesOfGraphs} we give a rigorous definition of a family of reaction network graphs and some preliminary results. In Section \ref{sec:matroids} we focus on biochemical and algebraic question 2 using matroid theory. We also introduce some new terminology which simplifies the proofs in the remainder of the paper. In Section \ref{sec:Multi} we prove the main result on multistationarity using the matroidal language developed in the previous section. We discuss our results and suggest further directions in Section \ref{sec:Conclusion}.

\section{Background}\label{sec:background}

In this section we briefly introduce aspects of chemical reaction network theory and discuss essential notions of algebraic geometry and matroid theory. 

\subsection{Chemical Reaction Network Theory}\label{sec:CRNT}

Informally a chemical reaction network (CRN) can be described by a multiset $\mathfrak{N} = \{\mathcal{S},\mathcal{C},\mathcal{R}\}$, where $\mathcal{S}$ is the set of species, $\mathcal{C}$ is the set of linear combinations of species (complexes) and $\mathcal{R}$ is the set of reactions.


\begin{example}[Michaelis-Menten]
The set $\mathcal{S}$ of chemical species present in the network 
$$E+S \rightleftharpoons ES \rightarrow E+P$$
is defined by $\mathcal{S} = \{E,S,ES,P\}$. The complexes, which are linear combinations of species, are $\mathcal{C} = \{E+S,ES,E+P\}$. The reaction set is $\mathcal{R} = \{E+S\to ES,ES\to E+S,ES\to E+P\}$.
\end{example}
The multiset $\mathfrak{N}$ defines a directed graph (digraph) $\mathcal{G}$ whose vertex set is $\mathcal{C}$ and whose edge set is defined by the reaction set $\mathcal{R}$. The reaction from complex $C_i$ to $C_j$  is an element of the reaction set $\mathcal{R}$ if and only if there is a directed edge $C_i\to C_j$ in $\mathcal{G}$. Let $X_l\in\mathcal{S}$ and $\{\alpha_{il}\}\in \ZZ_{\geq 0}$. A reaction from complex $C_i = \sum_l \alpha_{il}X_l$ to $C_j = \sum_l\alpha_{jl}X_n$, with rate constant $\kappa$ is written as
\begin{equation}
  \sum_l \alpha_{il}X_l\xrightarrow{\kappa} \sum_l \alpha_{jl}X_l;
  \label{eq:GenericReac}
\end{equation}the constants $\alpha_{il}$ are called the stoichiometric coefficients of the complex $C_i$.
 Let the reaction vector for the $\ell^\text{th}$ reaction $C_i\to C_j$ be $r_\ell = \alpha_j-\alpha_i$ where $\alpha_i$, $\alpha_j$ are the column vectors of the stoichiometric coefficients of the complexes $C_i$ and $C_j$. The $n\times m$ matrix of all reaction vectors $\Gamma = (r_1,\dots, r_m)$ is called the {\em stoichiometric matrix}. The rate constant $\kappa$ assigns a weight to each edge of the digraph $\mathcal{G}$, making $\mathcal{G}$ a weighted digraph.
Definition \ref{def:ReacGraph} gives the description of a CRN which we are going to adopt for the remainder of this paper. While this is not the standard definition of a CRN, but equivalent to it, it allows us to focus more on the graphical structure of CRNs.

\begin{definition}
  A chemical reaction network is a weighted directed graph $\mathcal{G} = (\mathcal{C},\mathcal{R})$ with vertex set $\mathcal{C}$, edge set $\mathcal{R}$ and edge weights $\kappa=(\kappa_1,\dots,\kappa_m)^T$.
  \label{def:ReacGraph}
\end{definition}

To connect the graphical structure of a CRN to its dynamical properties a description of reaction kinetics is needed. Previous work introduced a number of reaction laws such as mass action, rational function kinetics, Michaelis-Menten kinetics or Hill function kinetics \cite{Cornish}. In this paper we use mass action kinetics
which assigns a monomial to each complex in the network. Let the chemical concentration of chemical species $X_n$ be $x_n$, then the monomial for complex $C_i$ is obtained by
\begin{equation}
  x^{\alpha_i} = x_1^{\alpha_{i1}}\cdots x_n^{\alpha_{in}}.
\end{equation}
Hence, using the representation of complexes as monomials we represent $\mathcal{C}$ as a vector of monomials $x^\alpha = (x^{\alpha_1},\dots,x^{\alpha_m})^T$.

\begin{example}\label{Ex:Phos_1}
  Revisiting the one site phosphorylation network \eqref{netw:Phos1}, we map each species to its concentration $E\to x_1,\:F\to x_2,\: S_0\to x_3,\: S_1\to x_4,\: ES_0\to x_5,\: FS_1\to x_6$ and introduce a vector of rate constants $\kappa = (\kappa_1,\dots,\kappa_6)^T$. Then, the reaction network is represented by the weighted digraph
  \begin{align*}
  &x_1x_3\xrightleftharpoons[\kappa_2]{\kappa_1} x_5 \xrightarrow{\kappa_3} x_1x_4,\\
  &x_2x_4\xrightleftharpoons[\kappa_5]{\kappa_4} x_6 \xrightarrow{\kappa_6} x_2x_3.
  \end{align*}
\end{example}

The dynamics of the network can be expressed in terms of the network structure and the stoichiometric coefficients as a set of ordinary differential equations
\begin{equation}
  \frac{dx}{dt} = \alpha^T A_\kappa x^\alpha
  \label{eq:DynSys}
\end{equation}
where $A_\kappa$ is the negative weighted graph Laplacian of $\mathcal{G}$, $\alpha^T$ is the matrix of stoichiometric coefficients and $x^\alpha$ is a vector of monomials. An alternative representation of equation \eqref{eq:DynSys} assigns a monomial $\kappa_\ell x^{\alpha_\ell}$ to the $\ell^\text{th}$ reaction in the network to build a vector $R(x) = (\kappa_1x^{\alpha_1},\dots,\kappa_mx^{\alpha_m})^T$. Then, the dynamical system \eqref{eq:DynSys} is given by $dx/dt = \Gamma R(x)$, where $\Gamma$ is the stoichiometric matrix as above.

The left kernel of the stoichiometric matrix, $\Gamma$, is of biochemical importance as it describes conservation relations. Conservation relations induce linear relations between the variables. Informally we say that a set of species is conserved if their {total concentration, $c$, is constant}. In particular, suppose $z\in \ker\left(\Gamma^T\right)$, then $z^T(dx/dt) =0$ which implies $z^T x = c \in \RR_{>0}$; this latter equation is then a conservation relation.

\begin{remark}
Given a CRN, $\mathfrak{N}$, with a maximum of $d$ independent conservation relations, the linear subspace defined by the conservation relations, often referred to as compatibility class, may be compactly written as $Zx-c=0$, where $c\in\RR_{\geq 0}^d$. The rows of $Z x-c$ define the subspace and the matrix $Z = (z_1,\dots,z_d)^T$ represents the conservation relations.
\label{rem:ZMatrix}
\end{remark}

\begin{example}
In the one site phosphorylation system \eqref{netw:Phos1} (also Example \ref{Ex:Phos_1}), the enzyme $E$ is conserved, but can exist in two states, the free state $E$ and the bound state $ES_1$. Thus, the total concentration $c_1 = x_1 + x_3$ is conserved. In total there are three independent conservation relations as the total mass of $E$, $F$, and the substrate is conserved respectively.
\end{example}

We now proceed to defining the steady states of a chemical reaction network \cite{PerezMillan}.

\begin{definition}
  A vector  $x^*\in \CC^n$ is called a steady state of a CRN if $\alpha^T A_\kappa \left(x^*\right)^\alpha = 0$. A positive steady state is a steady state such that $x^*\in \RR_{>0}^n$ and $\alpha^T A_\kappa \left(x^*\right)^\alpha = 0$.
\end{definition}

Often one is interested in whether a chemical reaction network can have multiple positive steady states for a given set of rate constants $\kappa$ and total concentrations $c$. 

\begin{definition}
A chemical reaction network is multistationary if there exists a set of parameters $\{\kappa_1,\dots,\kappa_m\}$ such that $\alpha^T A_\kappa \left(x^*\right)^\alpha = \alpha^T A_\kappa \left(y^*\right)^\alpha = 0$ and $x^*-y^* \in \ker (Z)$ for two distinct positive steady states $x^*$ and $y^*$.
\end{definition}







\subsection{Algebraic Geometry and Toric Steady States}\label{sec:Toric}

In this subsection we briefly introduce the notion of toric varieties from algebraic geometry, and their connections to chemical reaction networks. For an introduction to toric varieties we refer the reader to \cite{FultonToric,CLS_Toric}. 

Given polynomial equations $f_1(x_1,\dots, x_n),\dots, f_r(x_1,\dots, x_n)$ in the polynomial ring $\CC[x_1,\dots, x_n]$ we define the {\em algebraic variety} 
\begin{equation}
    V(f_1,\dots, f_r)=\{ x\in \CC^n\;|\; f_1(x)=\cdots =f_r(x)=0 \}.
\end{equation}
Note that, by definition, $g(x)=0$ for any $x\in V(f_1,\dots, f_r)$ and any polynomial $g$ in the {\em ideal} $\langle f_1,\dots, f_r\rangle=\{\sum_i h_i f_i \;|\: h_i \in\CC[x_1,\dots, x_n] \}\subset  \CC[x_1,\dots, x_n]$; hence we may also associate a variety $V(I)=V(f_1,\dots, f_r)$ to an ideal $I=\langle f_1,\dots, f_r\rangle$.

As in \S\ref{sec:CRNT} we will consider a chemical reaction network $\mathcal{G}=(\mathcal{C},\mathcal{R})$ defining a ODE model $$
  \frac{dx}{dt} = \alpha^T A_\kappa x^\alpha
$$as in \eqref{eq:DynSys}. The set of all steady states of this system of ODEs (in an affine space $\CC^n$) defines an algebraic variety generated by the {\em steady state ideal} $I = \langle \alpha^T A_\kappa x^\alpha\rangle \subseteq R = \CC[x_1,\dots,x_n]$. The ring $R$ is generated by the chemical concentrations and defined over the field $\CC$.

A toric variety is an algebraic variety which can be described as the image of a monomial map. Work over the field $\CC$ and 
fix an integer $d \times n$-matrix
 $A$, with columns $a_1,a_2,\ldots,a_n$, 
and rank $d$.
A given column vector $a_i$ defines a (Laurent) monomial
$t^{a_i} = t_1^{a_{1i}} t_2^{a_{2i}} \cdots t_d^{a_{di}}$ where $t \in (\CC^*)^d$, and $\CC^* = \CC \backslash \{0\}$ denotes the non-zero elements of the field (this is sometimes called the complex torus). 
The {\em affine toric variety over $\CC$} defined by $A$ is $$X_A\,=\,\overline{\{ (t^{a_1}, \ldots , t^{a_n}) \,:\,t \in (\CC^*)^d \}}\subset \CC^n.$$ In words, the variety $X_A\,$ is the (Zariski) closure in $\CC^n$ of the image of the monomial map defined by $A$. The defining implicit equations for $X_A$ can always be chosen to be binomials, that is $X_A=V(I)$ where I is an ideal defined by binomial equations. Specifically by \cite[Corrollary~4.3]{GBCP} we have that \begin{equation}
I=\left\langle x^{c^+}-x^{c^-}\; | \; c \in \ker(A) \right\rangle ,\label{eq:BinomialIdeal}
\end{equation}
 where $c^+_i$ is equal to $c_i$ if $c_i>0$ and $0$ otherwise, and where $c^-_i$ is equal to $|c_i|$ if $c_i<0$ and $0$ otherwise. The ideal $I$ above is, additionally, a {\em prime ideal}; we say an ideal $I\subset \CC[x_1,\dots, x_n]$ is prime if whenever $f\cdot g \in I$ then either $f\in I$ or $g\in I$ (or both are in $I$). Geometrically this means the associated variety is irreducible (i.e.~it cannot be written as a non-trivial union of other varieties) and reduced (meaning a generic point has multiplicity one, e.g. the roots of $x^2=0$ have multiplicity two). 
 
 Conversely, any prime polynomial ideal $\langle f_1,\dots,f_m \rangle$ in $\CC[x_1,\dots,x_{n}]$, where each $f_i$ is a binomial, defines a toric variety ${X}_A$ in $\CC^n$.
Chemical reaction networks whose steady state ideal is generated by such prime ideals have been studied in the literature e.g.~\cite{PerezMillan}. To simplify notation we make the following definition.

\begin{definition}
A chemical reaction network whose steady state ideal has an associated prime that is binomial is a toric chemical reaction network.
\label{def:ToricNetwork}
\end{definition}

The fact that toric varieties can be parameterized by monomials over $\CC^*$ also yields a parametric representation of the positive part of the toric variety, i.e.~of the positive steady states of the associated reaction network. For this reason we restrict our analysis to toric networks. We will also wish to explicitly incorporate the reaction rates in our analysis, hence our starting point will be a binomial ideal with coefficients in $\RR(\kappa):=\RR(\kappa_1,\dots ,\kappa_m)$, the field of rational functions of the reaction rates with real coefficients. 
More precisely we consider a prime binomial steady state ideal defined by $\nu$ equations, $$I =  \langle c_ix^{b^+_i} - k_ix^{b^-_i}\; | \; i=\{1,\dots,\nu\}, \;c_i,\:k_i\in \RR(\kappa) \rangle\subseteq \RR(\kappa)[x_1,\dots,x_n].$$ The vectors $b^+_i$ and $b^-_i$ are positive column vectors with disjoint support and let $B=(b_1,\dots,b_\nu)$ be the matrix with columns $b_i = b^+_i - b^-_i$, called the exponent matrix of $I$. Then there exists a matrix $A$ such that $b_i \in \ker(A),$ for $i= 1,\dots,\nu$. The matrix $A$ is a $d\times n$ integer matrix where $d$ is the dimension of the toric variety and $n$ is the dimension of the ambient affine space. Note that, since the dimension
of a variety cannot exceed the dimension of its ambient space, $\text{rank}(A) \leq d \leq n$.

\begin{remark}
Note that there is some freedom in the choice of the matrix A, as integer row operations on rows of A will not change its kernel and, hence, the variety it parameterizes is left unaltered. Therefore, we informally refer to any matrix $A$ as \emph{the} $A$-matrix (associated to a toric variety $X_A=V(I)$).
\end{remark}

To explicitly incorporate the reaction rates in the monomial parameterization we will now define the scaled toric variety of a matrix $A$ and a vector $x^*\in \CC^n$. The value $x^*$ can be thought of a particular steady state which in turn depends on the reaction rates, see \mbox{Remark \ref{rem:Constants}}.

\begin{definition}[Scaled Toric Variety]
Given an $x^*=(x_1^*,\dots,x_n^*)\in \CC^n$ define a (Laurent) monomial map $\psi_A^{(x^*)}:=\psi_A$ with
$$
  \psi_A:(\CC^*)^d\to \CC^n {\;\; \;\rm where}\;\;
  t\mapsto (x^*_1t^{a_1},\dots, x^*_nt^{a_n})
$$
for $a_i$ a column of $A$. The (Zariski) closure of the image of this monomial map defines a scaled toric variety,
$X_{A,x^*}=\overline{\psi_A((\CC^*)^d)}$.
\label{def:ToricVariety}
\end{definition}

The monomial map $\psi_A$ also induces a parameterization map.
\begin{definition}
The parameterization map defined by the $A$-matrix is the $\CC$-algebra homomorphism
\begin{align*}
   \phi_A: \CC[x_1,\dots,x_n] &\to \CC[t_1^\pm,\dots,t_d^\pm]\\
   \phi_A(x_i) = x^*_it^{a_i} &= x^*_i t_1^{a_{i_1}}\cdots t_d^{a_{i_d}}.
\end{align*}
Note, as above, the map $\phi_A$ depends on $x^*\in \CC^n$.
\label{def:ParamMap}
\end{definition}

\begin{example}
The steady state ideal of \eqref{netw:Phos1} is 
\begin{align*}
    I = \langle &-\kappa_1x_1x_3+(\kappa_2+\kappa_3)x_5,-\kappa_4x_2x_4+(\kappa_5+\kappa_6)x_6,\\
    &-\kappa_1x_1x_3+\kappa_2x_5+\kappa_6x_3,\kappa_3x_5-\kappa_4x_2x_4+\kappa_5x_6,\\
    &\kappa_1x_1x_3-(\kappa_2+\kappa_3)x_5,\kappa_4x_2x_4-(\kappa_5+\kappa_6)x_6\rangle,\\
    =\langle &\kappa_2x_5-\kappa_5x_6, \kappa_4x_2x_4 -(\kappa_5+\kappa_6)x_6, \kappa_1x_1x_3 - (\kappa_2+\kappa_3)x_5\rangle, 
\end{align*}
which is a (prime) binomial ideal in the ring $\RR(\kappa_1,\dots,\kappa_6)[x_1,\dots,x_6]$. Hence, we can find a parameterization $x_1 = x_1^*t_1,\: x_2 = x_2^*t_2,\: x_3 = x_3^* t_1^{-1}t_3^{-1},\: x_4 = x_4^*t_2^{-1}t_3^{-1},\: x_5 = x_5^* t_3^{-1},\: x_6 = x_6^* t_3^{-1}$ corresponding to the A-matrix $$A=\begin{pmatrix}
1 & 0 & -1 & 0 & 0 & 0\\
0 & 1 & 0 & -1 & 0 & 0\\
0 & 0 & -1 & -1 & -1 & -1
\end{pmatrix}.$$

\label{ex:Simple}
\end{example}

\begin{remark}
As shown in \cite{PerezMillan} the entries of $x^*$ are algebraic functions in $\kappa$. In practice we wish to restrict our choices of $\kappa$ to those such that when we evaluate $x^*$ at some fixed $(\kappa_1, \dots, \kappa_m)\in \RR_{>0}^m$ we have that $x^*\in  \RR_{>0}^n$. 
\label{rem:Constants}
\end{remark}
The proposition below connects the number of conserved quantities to the dimension of the toric variety and is straightforward, however, we include a proof as we were unable to locate one in the literature. 
 
\begin{proposition}
Consider a chemical reaction network with steady state ideal $I = \langle\alpha^TA_k x^\alpha\rangle$ in the polynomial ring $\RR[x_1,\dots,x_n]$. Let $V=\overline{V(I)\cap (\RR^*)^n}$ be the real steady state variety and denote the dimension of $V$ by $\dim(V)$. 
Let the linear equations $\ell_1,\dots, \ell_d$ be the corresponding conservation relations. Suppose that $\dim ({V\cap V(\ell_1,\dots \ell_d)})=d'$. Then $\dim(V)=d+d'$. Further if $V$ is a toric variety we have that $\dim(V\cap (\RR_{>0})^n)=d+d'$.\label{prop:DimAndConserveRelations}
\end{proposition}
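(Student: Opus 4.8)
The plan is to rephrase the statement as a claim about the linear projection defined by the conservation relations and then invoke the fiber-dimension theorem, using the monomial parameterization to pin down the tangent space of $V$. Collect the conservation vectors $z_1,\dots,z_d$ into the $d\times n$ matrix $Z$ of Remark \ref{rem:ZMatrix}; since the $z_i$ span $\ker(\Gamma^{T})$, the kernel $\ker Z$ is exactly the stoichiometric subspace $S=\mathrm{im}(\Gamma)$, of dimension $n-d$. I would then study the linear map $\pi\colon V\to\RR^{d}$, $x\mapsto Zx$, whose fibers are the intersections of $V$ with the compatibility classes $\{Zx=c\}$; the hypothesis says that the particular fiber $V\cap V(\ell_1,\dots,\ell_d)$ has dimension $d'$.

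One inequality is immediate. Adjoining the $d$ linear forms $\ell_1,\dots,\ell_d$ lowers dimension by at most one each, so by the affine dimension theorem every component of the slice has dimension at least $\dim(V)-d$; hence $d'\ge\dim(V)-d$, i.e.\ $\dim(V)\le d+d'$. For the reverse inequality I would show that $\pi$ is dominant, $\dim\overline{\pi(V)}=d$. Granting this, the fiber-dimension theorem yields $\dim(V)=\dim\overline{\pi(V)}+\delta=d+\delta$ with $\delta$ the generic fiber dimension; since the positive compatibility slices of a toric $V$ all share one dimension, $\delta=d'$ and $\dim(V)=d+d'$ follows.

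The crux, and the step I expect to be the main obstacle, is the dominance of $\pi$, equivalently the surjectivity of its differential $Z$ on the tangent space $T_{x_0}V$ at a smooth positive point $x_0$. Here I would use the toric structure decisively: differentiating the parameterization $x_i=x_i^{*}\,t^{a_i}$ of Definition \ref{def:ToricVariety} shows $T_{x_0}V=\mathrm{diag}(x_0)\cdot\mathrm{rowspan}(A)$, so that $Z|_{T_{x_0}V}$ is onto iff the $d\times\dim(V)$ matrix $Z\,\mathrm{diag}(x_0)\,A^{T}$ has rank $d$. I would establish this by ruling out its failure: a nonzero left-kernel vector $\lambda$ surviving for all $x_0$ produces a conservation combination $w=Z^{T}\lambda\in S^{\perp}$ with $A\,\mathrm{diag}(w)=0$, which makes the conserved quantity $w^{T}x$ constant along the whole of $V$; but no nontrivial combination of the $d$ independent conservation relations can be constant on $V$, since $V$ records steady states across a genuinely $d$-dimensional range of total concentrations. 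Equivalently, at a smooth point this is the transversality $T_{x_0}V+S=\RR^{n}$, and then the subspace identity $\dim(T_{x_0}V\cap S)=\dim(T_{x_0}V)+\dim(S)-\dim(T_{x_0}V+S)$ collapses to $d'=\dim(V)-d$, delivering $\dim(V)=d+d'$ in one stroke. The delicate point to handle carefully is that rank can only be read off on $V$ itself, not on all of $\RR^{n}$, so the argument must be run at a point of the image of the parameterization.

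For the final assertion I would observe that, once $\dim(V)=d+d'$ is known, the toric case is automatic: the positive points $V\cap(\RR_{>0})^{n}$ are precisely the image of the monomial map $\psi_A$, whose rank equals $\dim(V)=d+d'$, so the positive locus is full-dimensional in $V$ and $\dim\big(V\cap(\RR_{>0})^{n}\big)=d+d'$ as well. Throughout I would be careful to pick $x_0$ simultaneously smooth on $V$ and on the slice with the slice attaining dimension $d'$ there --- available by irreducibility of the toric variety and density of its positive part --- and, in the general (possibly reducible) case, to run the projection argument on a top-dimensional component with generic conservation constants so that the chosen fiber realizes the generic fiber dimension.
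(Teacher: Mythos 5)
Your route is genuinely different from the paper's. The paper's entire argument is a one-line appeal to generic transversality: the constants $c_j$ in the $\ell_j$ can be chosen freely, for a sufficiently general choice the affine subspace $V(\ell_1,\dots,\ell_d)$ meets $V$ transversally, and the dimension count $d'=\dim(V)-d$ drops out; the positive-orthant statement is then dispatched exactly as in your last paragraph, by noting that a monomially parameterized $V$ has the same dimension in $(\RR_{>0})^n$ as in $\RR^n$. You instead factor the claim through the projection $\pi=Z|_V$, prove one inequality by the affine dimension theorem and aim for the other via the fiber-dimension theorem plus an explicit tangent-space computation $T_{x_0}V=\mathrm{diag}(x_0)\cdot\mathrm{rowspan}(A)$. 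What your version buys is that it makes visible exactly where the content lies (dominance of $\pi$), which the paper's transversality phrasing hides: if $Z(V)$ were lower-dimensional, the ``generic transverse intersection'' would be empty and the formula would fail, so the paper is implicitly assuming the very dominance you single out.

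The gap is that you do not actually close that crux. Your argument that no nonzero $\lambda$ can lie in the left kernel of $Z\,\mathrm{diag}(x_0)A^{T}$ ends with the assertion that $w^{T}x$ cannot be constant on $V$ ``since $V$ records steady states across a genuinely $d$-dimensional range of total concentrations'' --- but that sentence \emph{is} the statement that $\pi$ is dominant, so the argument is circular. Likewise, the claim that all positive compatibility slices of a toric $V$ share one dimension (needed to identify $d'$ with the generic fiber dimension $\delta$) is asserted without proof; the compatibility classes are not torus-stable, so this does not follow from the torus action and really amounts to another genericity assumption on the $c_j$. Under the reading that the constants are generic and a generic compatibility class meets $V$ --- which is how the paper's proof must also be read --- both arguments go through and yours is a more informative, if longer, derivation; without that reading, neither proof is complete, and yours does not supply the missing dominance any more than the paper does.
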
\begin{proof}
Each linear form $\ell_j$ contains a constant term $c_j\in \RR$ which can be chosen freely. For a sufficiently general choice of $c_1,\dots, c_d$ the intersection $V\cap V(\ell_1,\dots \ell_d)$ is transverse, it follows that $d'=\dim(V\cap V(\ell_1,\dots \ell_d))=\dim(V)-\dim(V(\ell_1,\dots \ell_d))=\dim(V)-d$. Because $V$ is toric it is parameterized by monomials, and hence its dimension in the positive orthant is the same as its dimension over $\RR^n$.

\end{proof}

\begin{remark}
In the notation of Proposition \ref{prop:DimAndConserveRelations} any $d'>0$ implies that there is an infinite number of steady states for each general choice of $c_1,\dots, c_d$. Hence, the case $d' > 0$ renders the discussion of multistationarity irrelevant. Therefore for the remainder of this paper we assume that $d'=0$ meaning that the dimension of the toric variety is equal to the number of conservation relations.
\label{rem:assump}
\end{remark}

\begin{example}[Example \ref{ex:Simple} cont.]
The network of Example \ref{ex:Simple} it is known to have a finite number of positive steady states for each set of initial conditions ($d'=0$) and there are three independent conservation relations $x_1+x_5 = c_1$, $x_2+x_6 = c_2$ and $x_3+x_4+x_5+x_6 = c_3$. This implies that the dimension of the (toric) steady state variety, $V(I)$, is $\dim(V(I)) = 3$ which is indeed the case.
\end{example}

\subsection{Matroids}

One of the focuses of this paper is to find and study independent subsets of chemical species. Independent subsets can give valuable information about which species concentrations have to be measured and which concentrations can be determined from measurements \cite{Gross2016}. A simple example of linear independence is the set of three vectors $v_1 = (1,0)^T,\: v_2=(0,1)^T,\: v_3=(2,1)^T$. The set of vectors $\{v_1,v_2\}$ is linearly \emph{independent} as there exists no $\lambda \in \RR$ such that $\lambda v_1 = v_2$, whereas $v_3$ can be obtained by $v_1+v_2$ and $\{v_1,v_2,v_3\}$ is therefore \emph{dependent}. The vectors $v_1$ and $v_2$ are said to from a \emph{basis} of the set $\{v_1, \: v_2,\: v_3\}$ which is a familiar notion of linear algebra. The set $\{v_1,\:v_2,\: v_3\}$ is minimally dependent as it contains only a single element other than a basis and is a \emph{circuit}. Matroid theory extends the notion of independence to polynomials rings \cite{Oxley}. First, we define a matroid.

\begin{definition}[Matroid]
A matroid $\mathcal{M}(E,\mathcal{I})$ is a pair of two finite sets $(E,\mathcal{I})$ where $E$ is the ground set and $\mathcal{I}$ is a set of subsets of $E$, called independent sets, satisfying the following conditions.
\begin{enumerate}
\item The empty set, $\emptyset$, is independent such that $\emptyset\in \mathcal{I}$ and, hence, $\mathcal{I} \neq \emptyset$.
\item If $i\in\mathcal{I}$ and $i'\subseteq i$ then $i'\in\mathcal{I}$. This is called the {\em hereditary property}.
\item If $i_1,\: i_2 \in\mathcal{I}$ and $|i_1|<|i_2|$ then there exists an element $x\in i_2-  i_1$ such that $i_1\cup x \in \mathcal{I}$. This is the {\em exchange property}.
\end{enumerate}
\end{definition}

The notions of matroid \emph{basis}, \emph{rank} and \emph{circuit} will also be useful.

\begin{definition}
Matroid bases, rank and circuits are defined as follows:
\begin{itemize}
\item A \emph{basis} $S$ of a matroid $\mathcal{M}$ is a maximally independent subset, i.e.~a subset $S\in\mathcal{I}$ such that $S\cup k$ is a dependent set for all $k\in E-S$. Define the set of all bases to be $\mathcal{B}$.
\item The \emph{rank} $\rho$ is a function on a subset of $E$ which takes a set $e \subseteq E$, and returns the cardinality of the largest subset $i\subseteq e$ which also satisfies $i\in\mathcal{I}$.
\item The \emph{circuits} $C$ are minimally dependent sets, i.e.~subsets of $E$ which satisfy $C\not\in \mathcal{I}$ and $(C-z)\in\mathcal{I}$ for any $z\in C$.
\end{itemize}
\end{definition}

In particular for a matroid $\mathcal{M}$ it holds that all bases of $\mathcal{M}$ are equicardinal. An important notion in matroid theory is also the rank of a matroid.
\begin{definition}
The \emph{rank} of a matroid $\mathcal{M}$, $\rho(\mathcal{M})$, is the cardinality of a basis.
\end{definition}

Due to the general definition of a matroid many mathematical objects have a matroid structure such as sets of vectors in a vector space or graphs. One class of matroids relevant for chemical reaction networks are {\em algebraic matroids}. Algebraic matroids encode the algebraic dependencies between the variables of a polynomial ring $\CC[x_1,\dots,x_n]$ in a prime ideal $P$. 
A more detailed discussion of these ideas in given in Section \ref{sec:matroids}. 
For information on how to compute algebraic matroids we refer the reader to \cite{Oxley,Rosen2014}.
Since we are interested in the relation between the matroids of families of reaction networks we introduce the notion of a submatroid.

\begin{definition}[Submatroid]
Given a subset $E'\subset E$, a matroid $\mathcal{M}' (E',\mathcal{I}')$ is a submatroid of $\mathcal{M}(E,\mathcal{I})$ if $\mathcal{I}' = \mathcal{I}\cap\mathcal{P}(E')$, where $\mathcal{P}(E')$ is the power set of $E'$. The rank function of $\mathcal{M}'$ is the restriction $\rho_{\mathcal{M}'}(S) = \rho_\mathcal{M}(S)$ for $S\subseteq E'$.
\end{definition}
Submatroids of algebraic matroids contain a subset of the polynomial relations between the variables of circuits of the original matroid. The relations between the variables of circuits are referred to as {\em circuit polynomials}. Below we will consider algebraic matroids of toric networks and the relations of matroids within a family of networks.

\begin{example}
The algebraic matroid of \ref{ex:Simple} can be calculated using the computer algebra system Macaulay2 \cite{M2}. It is a matroid on the ground set $E = \{x_1,\dots,x_6\}$ of rank $3$ with $12$ bases. Hence, as we will show in later sections, measuring only four species will suffice to reject the one site model, any three species in a basis and one extra species. Further, the matroid of the two site distributive system is a rank $3$ matroid with $61$ bases and the matroid of the two site processive network is a rank $3$ matroid with $19$ bases. Both two-site model matroids contain the one site matroid as a submatroid.
\end{example}

\section{Families of Reaction Networks}\label{sec:FamiliesOfGraphs}

In this paper we obtain results which hold for a range of networks rather than a single network. When the properties studied (i.e.~multistationarity) are present in a ``small network'', these properties lift to all larger networks constructed by the procedure outlined below. We call the collection of all such networks a family $\mathcal{N}$ with members $\mathcal{N}_i$. In this section we rigorously define families of networks and give examples of chemical systems fulfilling the family property. Furthermore, we prove some results on families which have toric steady states.

\subsection{Family Definition and Properties}

Families of networks can be found in a wide range of biochemical settings such as multi-site phosphorylation \cite{ProcPhos} (e.g.~cellular signalling, DNA transcription, cell death), kinetic proofreading \cite{McKeithan1995} (immune response) or compartmentalised diffusion \cite{Woolley2011} (spatial models).

We identify a family by properties of its reaction graphs. If we can construct the graph of another network from a given network by a fixed set of procedures, the networks are in the same family; an example of such a construction is given in Figure \ref{fig:Example}.  

\begin{figure}[h!]
    \centering
    \subfloat[The graph $\mathcal{G}_0$.]{\includegraphics[width=.25
    \textwidth]{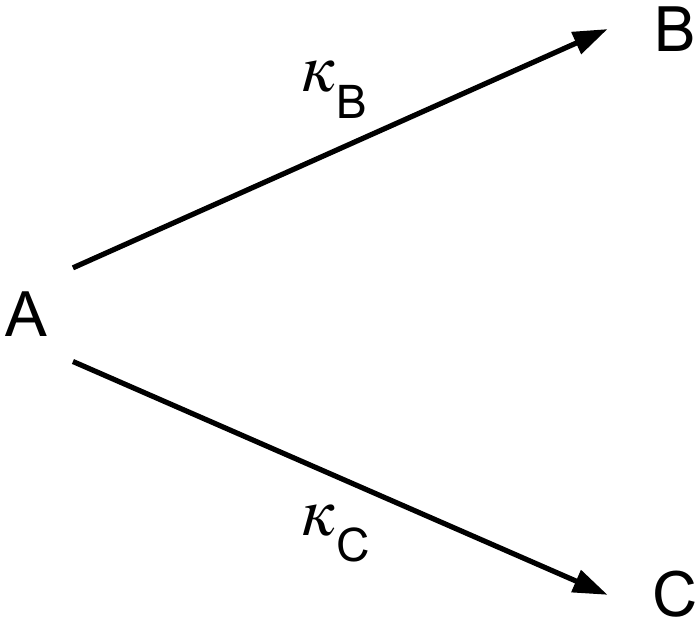}\label{subfig:1}}\hspace{1cm}
    \subfloat[Adding a species $Y_1$ to form the intermediate network]{\includegraphics[width=.25\textwidth]{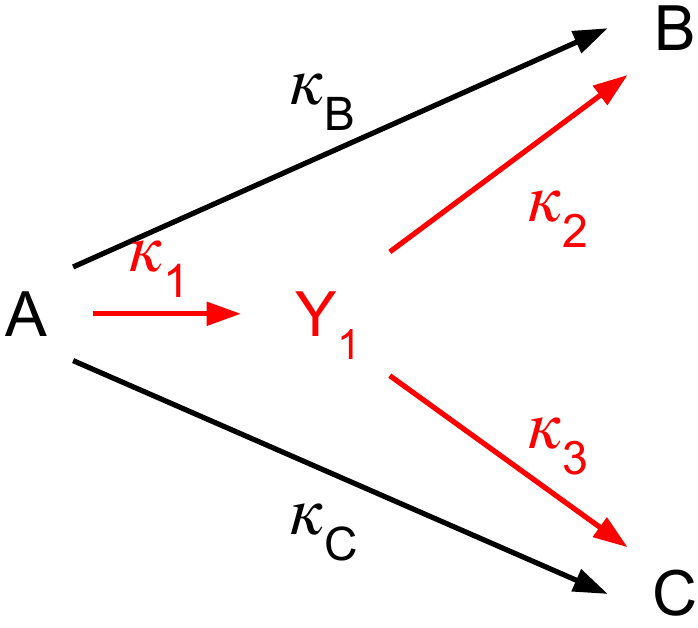}\label{subfig:2}}\hspace{1cm}
    \subfloat[Deleting edges to form the graph $\mathcal{G}_1$]{\includegraphics[width=.25\textwidth]{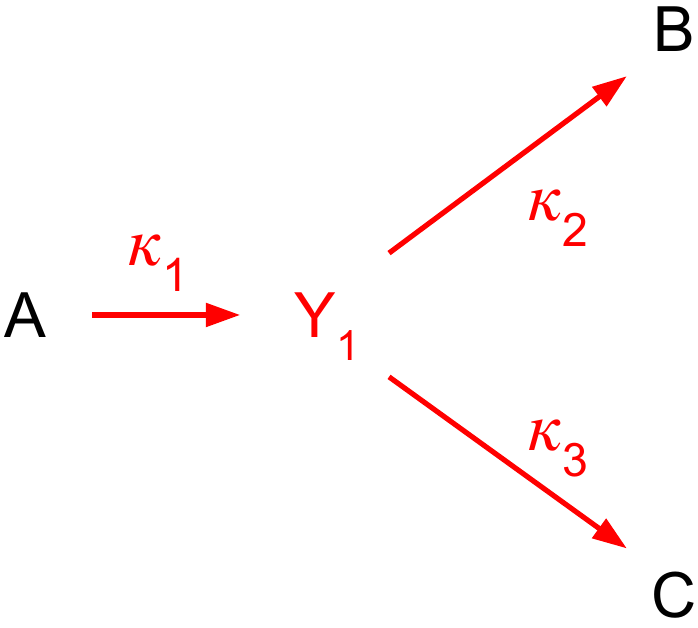}\label{subfig:3}}
    \caption{Starting from the graph in \ref{subfig:1}, first new vertices and edges are added in \ref{subfig:2} and, second, some edges of the original graph are deleted in \ref{subfig:3}. Note, that Figures \ref{subfig:1} and \ref{subfig:3} can be obtained from the intermediate network \ref{subfig:2} by setting the edge weights $\{\kappa_1,\kappa_2,\kappa_3\}$ or, respectively, $\{\kappa_A,\kappa_B\}$ to zero.}
    \label{fig:Example}
\end{figure}

\begin{remark}[Informal Construction of Families of Graphs]
Fix a labelled digraph $\mathcal{G}_n$ and an unlabelled digraph $M$; from these build new family members step-by-step. At each step assign a new set of labels to $M$, rendering $M$ a labelled digraph. The vertices of $\mathcal{G}_{n+1}$ are the union of those of $\mathcal{G}_{n}$ and of $M$. The edges of the graph $\mathcal{G}_{n+1}$ are obtained from $\mathcal{G}_{n}$ as follows:
\begin{enumerate}
    \item Add a set $\mathcal{E}$ of edges adjacent to both some vertices of $M$ and some vertices of $\mathcal{G}_n$; the resulting graph $\mathcal{G}_n^{\rm int}$ is referred to as the {\em intermediate graph} (see also Definition \ref{def:InterNet}).
    \item Delete some subset of the edges of $\mathcal{G}_{n}$ from the graph $\mathcal{G}_n^{\rm int}$ to form $\mathcal{G}_{n+1}$.
\end{enumerate}
\label{remark:InformalFamilyConstuct}
\end{remark}

We formalise this construction using a definition from graph theory \cite{Noy2004}. Graphs constructed as in Remark \ref{remark:InformalFamilyConstuct} form a family if they satisfy the conditions of Definition \ref{def:recursiveG} below.

\begin{definition}[See also Section 2 of \cite{Noy2004}]
For a reaction graph $\mathcal{G}=(\mathcal{C},\mathcal{R})$ and for $U\subseteq \mathcal{C}$ let $\mathcal{G}[U]$ be the subgraph induced by $U$ and $N_\mathcal{G}(U)$ the set of vertices adjacent to some vertex in $U$. Consider a set of graphs $\{\mathcal{G}_i\}_{i\geq 0}$ where $\mathcal{G}_i=(\mathcal{C}_i,\mathcal{R}_i)$. Set $\mathcal{W}_0=\mathcal{C}_{0}$ and $\mathcal{E}_0=\mathcal{R}_0$; additionally let $\mathcal{W}_{i+1} = \mathcal{C}_{i+1} - \mathcal{C}_{i},$ and $\mathcal{E}_{i+1} = \mathcal{R}_{i+1}-\mathcal{R}_{i}$ for $i\geq 0$. Further, there exists a positive integer $n$, a
labelled graph $M$ and a set of reactions $Y$, such that the set $\{\mathcal{G}_i\}_{i\geq 0}$ is called a family of graphs if the following properties hold:
\begin{enumerate}
\item $N_{\mathcal{G}_n}(\mathcal{W}_n)\subseteq \mathcal{W}_0\cup \left(\bigcup_{i=0}^r \mathcal{W}_{n-i}\right)$ for $n>r$,
\item $\mathcal{R}_n=\left(\mathcal{R}_{n-1} - Y\right)\cup \mathcal{E}_n$, where $Y\subseteq \bigcup_{i=1}^r \mathcal{E}_{n-i}$,
\item the graph $\mathcal{G}_n\left[\mathcal{W}_0\cup\left(\bigcup_{i=0}^r \mathcal{W}_{n-i}\right)\right]$ is equal to $M$ for $n>r$. In addition, $\mathcal{G}_n[\mathcal{W}_n]$ is always the same graph.
\end{enumerate}
\label{def:recursiveG}
\end{definition}
Put simply the last condition says that the graph ``added" to the previous graph must be the same for each step throughout the family; this is illustrated in Figure \ref{fig:LadderGraph} and further elucidated in \cite{Noy2004}.
\begin{figure}
    \centering
    \includegraphics[width=.4\textwidth]{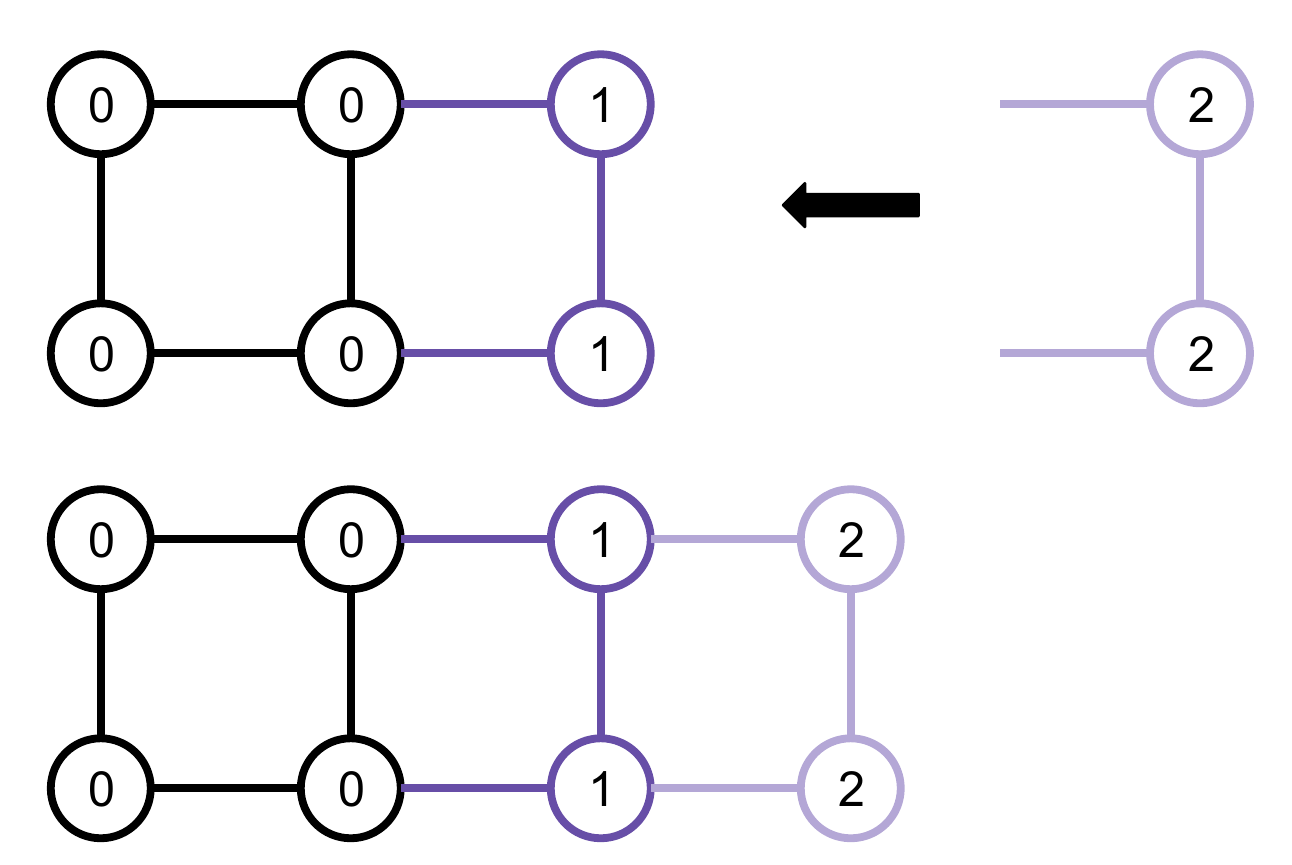}
    \caption{An illustration of building a two-site ladder graph. The vertices are labelled according to the sets $\mathcal{W}_i$ and the edge colours highlight the sets $\mathcal{E}_i$. It can be seen that the labelled graph $M$ added to $\mathcal{G}_i$ is always the same.}
    \label{fig:LadderGraph}
\end{figure}
In the context of chemical reaction networks we add another condition, namely, that in every subsequent family member there appears at least one new chemical species:
\begin{enumerate}
    \item[\textit{4.}] $\mathcal{S}_M\subsetneq \mathcal{S}_N$\text{ for }$M<N$.
\end{enumerate}
We assume that in a network with a given set of chemical species every complex which  is chemically possible and that can be formed with this set of species is already present in the network; i.~e.~if in a network with species $H_2$ and $O_2$ chemistry dictates that the only possible reaction is $2\,H_2 + O_2 \to 2\,H_2O$ then only the complexes $2\,H_2 + O_2$ and $2\,H_2O$ exist for this species set. Therefore, to give rise to a larger network new species have to be added.

\begin{remark}
Note that not every infinite sequence of networks is a family as defined in Definition \ref{def:recursiveG}. For a non-example see Example \ref{ex:NonEx}.
\label{rem:Warning1}
\end{remark}

\begin{remark}
For ease of notation, the unlabelled graph $M$ is drawn in a ``dot 
and arrow'' representation; e.g. the graph $M = \{\boldsymbol{\cdot}, \boldsymbol{\cdot} \to \boldsymbol{\cdot}\}$ corresponds to a graph with two connected components, an isolated vertex and a component with two unlabelled vertices and a directed edge between them.
\end{remark}

\begin{example}[Distributive Phosphorylation]
Distributive phosphorylation with $N=1$ phosphorylation sites follows the reaction scheme
\begin{align*}
  E+S_0 &\rightleftharpoons ES_0\rightarrow E+S_1,\\
  F+S_1 &\rightleftharpoons FS_1\rightarrow F+S_0.
\end{align*}
The reaction scheme is a digraph $\mathcal{G}_1 = (\mathcal{W}_1,\mathcal{E}_1)$ with $\mathcal{W}_1 = \{E+S_0,ES_0,E+S_1,F+S_1,FS_1,F+S_0\}$ and $\mathcal{E}_1 = \{(E+S_0, ES_0),(ES_0, E+S_0), (ES_0, E+S_1), (F+S_1, FS_1), (FS_1, F+S_1), (FS_1, F+S_0)\}$.
To construct $\mathcal{G}_2$ we use $\color{OliveGreen}\mathcal{W}_2=\{ES_1,E+S_2,F+S_2,FS_2\}$ and $\color{OliveGreen}\mathcal{E}_2=\{(E+S_1, ES_1), (ES_1, E+S_1),(ES_1, E+S_2),(F+S_2, FS_2),(FS_2, F+S_2),(FS_2, F+S_1)\}$. Defining $\mathcal{G}_2 =(\mathcal{W}_1\cup \mathcal{W}_2,\mathcal{E}_1\cup \mathcal{E}_2)$ gives
\begin{align*}
  E+S_0 &\rightleftharpoons ES_0\rightarrow E+S_1 {\color{OliveGreen}\rightleftharpoons ES_1\rightarrow E+S_2},\\
  {\color{OliveGreen}F+S_2\rightleftharpoons FS_2\rightarrow}F+S_1 &\rightleftharpoons FS_1\rightarrow F+S_0.
\end{align*}
Hence (in the notation of Definition \ref{def:recursiveG}) , $M = \{\boldsymbol{\cdot}\to \boldsymbol{\cdot},\: \boldsymbol{\cdot}\rightleftharpoons \boldsymbol{\cdot}\}$ and $r=1$. The digraphs $\mathcal{G}_1$ and $\mathcal{G}_2$ define the family members $\mathcal{N}_1$ and $\mathcal{N}_2$.
Inductively, this procedure can be continued to member $\mathcal{N}_N$.
By inspection of the green subgraph we can see that condition 3 of Definition \ref{def:recursiveG} is fulfilled for any $N>0$ and, therefore, the distributive phosphorylation networks form a family. Further, $\mathcal{N}_1$ is a subnetwork of $\mathcal{N}_2$ as defined in \cite[Definition 2.2]{Joshi2013}.
\label{ex:DistPhos}
\end{example}


\begin{example}[Processive Phosphorylation]
Processive phosphorylation of a substrate with $N=1$ phosphorylation sites follows the reaction scheme
\begin{align*}
  E+S_0 &\rightleftharpoons ES_0\rightarrow E+P,\\
  F+P &\rightleftharpoons FS_1\rightarrow F+S_0.
\end{align*}
For $N=1$ this is the same reaction scheme as for distributive phosphorylation, except for the relabelling of the fully phosphorylated substrate as product $P$.
However, to construct the next family member from the digraph $\mathcal{G}_1$ we use $\color{OliveGreen}\mathcal{W}_2=\{ES_1,FS_2\}$ and $\color{OliveGreen}\mathcal{E}_2=\{(ES_0, ES_1), (ES_1, E+P),(F+P, FS_2), (FS_2, F+P), (FS_2, FS_1)\}$. Next, delete edges $Y_1=\{(ES_0,E+P),(F+P,FS_1),(FS_1,F+P)\}$ which results in the graph $\mathcal{G}_2 =\left(\mathcal{W}_1\cup \mathcal{W}_2,\left(\mathcal{E}_1\cup \mathcal{E}_2\right)- Y_1\right)$ to give
\begin{align*}
  E+S_0 \rightleftharpoons ES_0 &{\color{OliveGreen}\rightarrow ES_1\rightarrow}E+P,\\
  F+P &{\color{OliveGreen}\rightleftharpoons FS_2\rightarrow} FS_1\rightarrow F+S_0.
\end{align*}
Where the vertices $M=\{\boldsymbol{\cdot},\boldsymbol{\cdot}\}$ were added and $r=1$ as in Definition \ref{def:recursiveG}.
Again, the digraphs $\mathcal{G}_1$ and $\mathcal{G}_2$ define the family members $\mathcal{N}_1$ and $\mathcal{N}_2$.
Condition 3 of Definition \ref{def:recursiveG} is fulfilled for any $N>1$ and, therefore, the processive phosphorylation networks form a family. 
\label{ex:ProcPhos}
\end{example}

\begin{example}[Non-example]
As mentioned in Remark \ref{rem:Warning1} not every infinite sequence of graphs forms a family. Consider the autocatalytic networks
\begin{align*}
    &X_i+X_{i+1}\rightarrow 2X_{i+1},\\
    &X_i\rightleftharpoons\emptyset\qquad \text{for }i=1,\dots, N\text{ and setting } X_{N+1}=X_1.
\end{align*}
In the reaction above $\emptyset$ represents production and degradation of a molecule by a mechanism not further studied.
Without loss of generality consider $\mathcal{N}_7$ and $\mathcal{N}_8$; we see that $\mathcal{C}_8$ is not a union of $\mathcal{C}_7$ and another graph as $X_7+X_1\in\mathcal{C}_7\not\subseteq\mathcal{C}_8$.
\label{ex:NonEx}
\end{example}

We conclude this subsection by considering the polynomial equations arising from the reaction graphs of successive members of a family. First, we see from Section \ref{sec:CRNT} that every reaction has a unique rate constant, $\kappa_i$, which is the edge weight of the reaction digraph. Further, in the dynamical system given by the equations \eqref{eq:DynSys} every monomial, which represents a reactant complex, is multiplied by a constant $\kappa_i$. Note that isolated vertices in the reaction graph do not contribute to the dynamics of the network. Hence, for families of networks which satisfy Definition \ref{def:recursiveG} with $Y = \emptyset$, we can derive the equations of the $N^\text{th}$ member of a family from the $(N+1)^\text{th}$ member by setting all the edge weights of the edges added to the reaction graph to zero. Formally, we define the evaluation map
\begin{align}
    \pi:\RR[\kappa_1,\dots,\kappa_{m+m'}][x_1,\dots,x_n]&\to\RR[\kappa_1,\dots,\kappa_m][x_1,\dots,x_n]\nonumber\\
    \pi(\kappa_i) &= \begin{cases}
    \kappa_i\text{ if }i\leq m,\\
    0\text{ otherwise.}
    \end{cases}
\end{align}
For networks which require the deletion of a set of edges, $Y$, we need the concept of an {\em intermediate network}.

\begin{definition}[Intermediate Network]
The intermediate network $\mathcal{G}_n^{\rm int}$ is the network constructed by only adding reactions to join the newly labelled graph $M$ to $\mathcal{G}_n$ and before any edges are deleted from the reaction graph. See step 1 of Remark \ref{remark:InformalFamilyConstuct}. 
\label{def:InterNet}
\end{definition}

\begin{remark}
The next member of a family is a subnetwork of the intermediate network. Hence, for families in which the next family member coincides with the intermediate network (i.e.~when $Y=\emptyset$ in Definition \ref{def:recursiveG}) this terminology is not required.
\end{remark}

The dynamical equations of the $N^\text{th}$ and $(N+1)^\text{th}$ member can be constructed from their intermediate network by defining the appropriate evaluation maps. As can be seen from condition 2 of Definition \ref{def:recursiveG} the reaction (edge) set of the $(N+1)^\text{th}$ member of a family is given by $\mathcal{R}_{N+1} = (\mathcal{R}_N - Y)\cup \mathcal{E}_N$. The edge sets for the $N^\text{th}$ and intermediate network are given by $\mathcal{R}_N$ and $\mathcal{R}_N\cup \mathcal{E}_N$ respectively. Define a function $\mu$ which associates a unique edge weight, $\kappa_{i,j}$, to every edge of the reaction graph,
\begin{align}
    \mu:\; \mathcal{C}\times\mathcal{C} &\to \RR[\kappa],\nonumber\\
    \mu\left((C_i,C_j)\right) &= \kappa_{i,j}.\label{eq:muFunc}
\end{align}
This gives the rate constants $\mu\left(\mathcal{R}_N\right) = \{\kappa_1,\dots,\kappa_m\}$, $\mu\left(\mathcal{E}_N\right) = \{\kappa_{m+1},\dots,\kappa_{m+m'}\}$ and $\mu\left(Y\right) = \{\kappa_1,\dots,\kappa_{m''}\}$ (where we relabel rate constants produced by \eqref{eq:muFunc} as required). By construction, an edge is not present in the reaction graph if it has an edge weight of $0$. Hence, two evaluation maps can be defined to map the edge sets of the intermediate network to the $(N+1)^\text{th}$ and $N^\text{th}$ network respectively,
\begin{align}
    \pi^+:\RR[\kappa_1,\dots,\kappa_{m+m'}][x_1,\dots,x_n]&\to\RR[\kappa_{m''+1},\dots,\kappa_{m+m'}][x_1,\dots,x_n]\nonumber\\
    \pi(\kappa_i) &= \begin{cases}
    \kappa_i\text{ if }i> m'',\\
    0\text{ otherwise,}
    \end{cases}
\intertext{and}
    \pi^-:\RR[\kappa_1,\dots,\kappa_{m+m'}][x_1,\dots,x_n]&\to\RR[\kappa_1,\dots,\kappa_m][x_1,\dots,x_n]\nonumber\\
    \pi(\kappa_i) &= \begin{cases}
    \kappa_i\text{ if }i\leq m,\\
    0\text{ otherwise.}
    \end{cases}\label{eq:pim}
\end{align}

This results in the edge weights $\pi^+\left(\mu\left(\mathcal{R}_N\cup \mathcal{E}_N\right)\right) = \{\kappa_{m''+1},\dots,\kappa_{m+m'}\} = \mu\left((\mathcal{R}_N - Y)\cup \mathcal{E}_N\right)$ and $\pi^-\left(\mu\left(\mathcal{R}_N\cup \mathcal{E}_N\right)\right) = \{\kappa_1,\dots,\kappa_{m}\} = \mu\left(\mathcal{R}_N\right)$, which, after taking the inverse of $\mu$, can be associated to the reaction sets of the $(N+1)^\text{th}$ and $N^\text{th}$ network respectively. In particular, if $\mathcal{G}_N^{\rm int}$ is the intermediate graph associated to $\mathcal{G}_N$ then applying $\pi^-$ to (the edge set of) $\mathcal{G}_N^{\rm int}$ will yield $\mathcal{G}_N$ and applying $\pi^+$ to (the edge set of) $\mathcal{G}_N^{\rm int}$ will yield $\mathcal{G}_{N+1}$.

\begin{example}[Processive Phosphorylation]
\begin{align}
    &\schemestart
    $E+S_0$\arrow{<=>[\footnotesize $\kappa_4$][\footnotesize $\kappa_5$]}[,0.8]$ES_0$\arrow(a--b){->[\footnotesize {\color{red}$\kappa_3$}]}[,0.8,red] $E+P$  \arrow(@a--c){->[\footnotesize ${\color{OliveGreen}\kappa_7}$]}[45,0.8,OliveGreen]${\color{OliveGreen}ES_1}$\arrow(@c--@b){->[\footnotesize ${\color{OliveGreen}\kappa_8}$]}[-45,0.8,OliveGreen]
    \schemestop\nonumber\\
    &\schemestart
    $F+P$\arrow(a--b){<=>[\footnotesize ${\color{red}\kappa_1}$][\footnotesize ${\color{red}\kappa_2}$]}[,0.8,red] $FS_1$  \arrow(@a--c){<=>[\footnotesize ${\color{OliveGreen}\kappa_9}$][\footnotesize ${\color{OliveGreen}\kappa_{10}}$]}[-45,0.8,OliveGreen]${\color{OliveGreen}FS_2}$\arrow(@c--@b){->[\footnotesize ${\color{OliveGreen}\kappa_{11}}$]}[45,0.8,OliveGreen]\arrow(@b--){->[\footnotesize $\kappa_6$]}[,0.8] $F+S_0$
    \schemestop
\end{align}
The reaction scheme for the first intermediate network of the processive phosphorylation family. If $\kappa_1 = \kappa_2 = \kappa_3 = 0$ then the $N=2$ site network is obtained. Equivalently, if we let $\kappa_7 = \kappa_8 = \kappa_9 = \kappa_{10} = \kappa_{11} = 0$, then we obtain the $N=1$ site network. The corresponding steady state ideals can be obtained in this manner from the ideal from the intermediate network.
\label{ex:ProcIntNet}
\end{example}

\subsection{Toric Families}

As described in Definition \ref{def:ToricNetwork}, a reaction network is called toric if it has toric steady states.  
Showing whether a chemical reaction network is toric is a non-trivial task; previous results have used deficiency theory \cite{Craciun2009}, network structure based methods \cite{PerezMillan} or methods based on detecting binomiality \cite{Conradi2015}. Formally, one needs to find the associated primes of a steady state ideal and compute a Gr{\"o}bner basis of each. If the reduced Gr{\"o}bner basis is binomial, then the ideal is a prime binomial ideal \cite{Eisenbud1996}. However, as every toric variety has non-zero components, removing any components of the steady state variety contained in the coordinate axes is often a first step towards identifying the toric components.
Algebraically this removal is accomplished by computing the saturation  $I^\infty = I:(x_1\cdots x_n)^\infty$, \cite{CLO}. In this subsection we prove a series of small results regarding the steady state ideals of families of toric networks. 
First, we define a toric family.

\begin{definition}[Toric family]
If every member of a family of networks is a toric chemical reaction network, the family is called a toric family.
\end{definition}

\begin{remark}
Prominent examples of toric families are multisite phosphorylation networks or compartmentalised diffusion networks.
\end{remark}

Lemma \ref{lem:SatEval} states that saturation and the evaluation map, $\pi$, commute. Therefore, the network operation of deleting edges in the reaction graph can be carried out before, or after, finding the prime binomial ideal in the steady state ideal.

\begin{lemma}
Let $\pi^-$ be the evaluation map \eqref{eq:pim} and let $I_{N+1},\;I_N\subseteq \RR[\kappa_1,\dots,\kappa_{m+m'},x_1,\dots,x_{n+n'}]$ be the steady state ideals of the $(N+1)^\text{th}$ and $N^\text{th}$ family member respectively. Then it holds that
	\begin{enumerate}[(i)]
		\item $\pi^-(I_{N+1}):(x_1,\dots,x_{n+n'})^\infty = \pi^-(I_{N+1}:(x_1,\dots,x_{n+n'})^\infty)$,
		\item $\pi^-(I_{N+1}:(x_1,\dots,x_{n+n'})^\infty) = I_N:(x_1,\dots,x_n)^\infty$
	\end{enumerate}
	\label{lem:SatEval}
\end{lemma}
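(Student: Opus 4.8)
```latex
\noindent\textbf{Proof proposal.}
The plan is to prove both parts separately, beginning with the commutation statement (i), which is essentially algebraic, and then using it together with the network construction to establish the identification (ii).

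For part (i), the key observation is that the evaluation map $\pi^-$ simply sets a designated block of the $\kappa$-variables to zero; concretely $\pi^-$ is the quotient homomorphism of $\RR[\kappa_1,\dots,\kappa_{m+m'}][x_1,\dots,x_{n+n'}]$ by the ideal generated by $\{\kappa_{m+1},\dots,\kappa_{m+m'}\}$, followed by relabelling. Since this is a ring homomorphism and the monomials $x_1\cdots x_{n+n'}$ involve only the $x$-variables (which $\pi^-$ fixes), the plan is to show directly that an element lies in $\pi^-(I_{N+1}):(x_1\cdots x_{n+n'})^\infty$ if and only if it lies in $\pi^-\bigl(I_{N+1}:(x_1\cdots x_{n+n'})^\infty\bigr)$. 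The $\supseteq$ direction is immediate: if $x^{\beta}g\in I_{N+1}$ for some power $x^\beta$, then applying $\pi^-$ gives $x^\beta \pi^-(g)\in \pi^-(I_{N+1})$, so $\pi^-(g)$ lies in the saturation. First I would verify the reverse inclusion, where I would lift a saturation witness over $\pi^-(I_{N+1})$ back across the quotient: if $x^\beta h\in\pi^-(I_{N+1})$ then $x^\beta h = \pi^-(f)$ for some $f\in I_{N+1}$, and one must produce a preimage of $h$ that already saturates inside $I_{N+1}$. The main obstacle here is precisely this lifting step, since a priori $\pi^-$ can collapse distinct elements; the clean way around it is to exploit that $I_{N+1}$ is generated by steady-state polynomials whose $\kappa$-dependence is separated from the $x$-monomials, so that a choice of monomial term order (or Gr\"obner basis) is preserved under the specialization, making the saturation computation commute with setting $\kappa$-variables to zero.

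For part (ii), the plan is to use the explicit description of the network construction from the previous subsection. By the discussion surrounding the maps $\pi^+,\pi^-$ and $\mu$, applying $\pi^-$ to the intermediate network's edge set yields exactly $\mathcal{G}_N$, and hence $\pi^-(I_{N+1})$ equals the steady-state ideal of $\mathcal{G}_N$ up to the extra variables $x_{n+1},\dots,x_{n+n'}$ introduced by the newly added species in $M$. The observation that isolated vertices and zero-weighted edges do not contribute to the dynamics means that, after $\pi^-$, the generators involving the new complexes degenerate appropriately, so that $\pi^-(I_{N+1})$ and the ideal $I_N$ (extended to the larger ring) agree after saturation removes the spurious coordinate-axis components carried by the additional variables. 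I would make this precise by tracking each generator $\alpha^T A_\kappa x^\alpha$ of $I_{N+1}$ under $\pi^-$ and matching it to the corresponding generator of $I_N$, noting that the new-species variables appear only in factors that saturation eliminates.

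Combining the two parts gives the claim, since the right-hand side of (i) is by part (ii) equal to $I_N:(x_1\cdots x_n)^\infty$. I expect the genuine difficulty to lie in the lifting argument in part (i): one must ensure that saturating and then specializing cannot introduce elements unattainable by specializing and then saturating, which is where the binomial/toric structure and the separation of $\kappa$- and $x$-variables do the real work. The bookkeeping in part (ii) should be routine once the correspondence between the evaluation maps and the graph operations, already established above, is invoked.
```
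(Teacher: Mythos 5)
Your part (ii) is in the same spirit as the paper's: both identify $\pi^-$ applied to the intermediate/larger ideal with the smaller network's ideal "by construction" and let the saturation absorb the extra variables $x_{n+1},\dots,x_{n+n'}$. The real issue is part (i). You correctly isolate the hard inclusion $\pi^-(I_{N+1}):(x_1\cdots x_{n+n'})^\infty \subseteq \pi^-\bigl(I_{N+1}:(x_1\cdots x_{n+n'})^\infty\bigr)$, but your proposed resolution does not close it. First, this inclusion is \emph{false} for general ideals, so no purely formal lifting argument can work: take $I=\langle x(\kappa_2+x)\rangle\subset \RR[\kappa_1,\kappa_2][x]$ with $\pi^-$ setting $\kappa_2=0$. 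Then $I:x^\infty=\langle \kappa_2+x\rangle$, so $\pi^-(I:x^\infty)=\langle x\rangle$, whereas $\pi^-(I)=\langle x^2\rangle$ and $\pi^-(I):x^\infty=\langle 1\rangle$; the element $1$ saturates downstairs but has no preimage saturating upstairs. So the lemma genuinely depends on structural properties of the steady-state ideals, and any proof must use them. Second, the specific mechanism you invoke --- that a Gr\"obner basis of $I_{N+1}$ survives the specialization $\kappa_{m+1}=\cdots=\kappa_{m+m'}=0$ so that saturation commutes with specialization --- is not established in your sketch and is false in general: specialization can annihilate leading coefficients (exactly what happens in the example above, where the leading term in $x$ of $\kappa_2+x$ changes meaning after $\kappa_2\mapsto 0$), and the conditions under which Gr\"obner bases specialize are precisely the nontrivial hypotheses you would need to verify for these ideals. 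As written, the "hard direction" of (i) is asserted, not proved.

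For comparison, the paper avoids the lifting problem entirely by working geometrically: it encodes $\pi^-$ as intersection with the linear subspace $V(L)$, $L=\langle\kappa_{m+1},\dots,\kappa_{m+m'}\rangle$, encodes saturation as the Zariski closure of the complement of $V(x_1\cdots x_{n+n'})$, and reduces (i) to an identity of closures, $\overline{X_{N+1}\cap V(L) - V(x_1\cdots x_{n+n'})}=\overline{X_{N+1}-V(x_1\cdots x_{n+n'})}\cap V(L)$, using that $V(L)$ is not contained in the coordinate hyperplanes. If you want to salvage an algebraic proof, you would need to import an equivalent geometric or structural fact (e.g.\ that no associated component of $I_{N+1}$ supported on $V(x_1\cdots x_{n+n'})$ is created or destroyed by the specialization); simply appealing to the separation of $\kappa$- and $x$-variables is not enough, as the counterexample shows.
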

The proof of this lemma uses the the geometric interpretation of the statements of the lemma and can be found in Appendix A.
Suppose that $\mathcal{N}$ is a toric family of reaction networks with the $(N+1)^{th}$ member of the family having toric steady states specified by the binomial ideal $I_{N+1}$. 
Lemma \ref{lem:SatEval} establishes a useful connection between species (corresponding to variables $x$), and reactions (corresponding to $\kappa$), or, equivalently, between vertices and edges of a reaction graph. By the extra Condition 4 of Definition \ref{def:recursiveG} every new edge must either originate or end on a complex containing a new species. Hence, by applying the evaluation map $\pi^-$ an ideal of an intermediate network $I^{\text{int}}_{N}$ ($=I_{N+1}$ for nested subnetworks) is automatically mapped from $\RR[x_1,\dots,x_{n+n'}]$ (corresponding to a network with $n+n'$ chemical species) to an ideal in the ring $\RR[x_1,\dots,x_n]$ (corresponding to a network with $n$ species).

\begin{example}[Processive Phosphorylation (cont.)]
The steady state ideal corresponding the network of Example \ref{ex:ProcIntNet} is
\begin{align}
    \begin{split}
        I = \langle &-\kappa_4x_1x_3+(\kappa_5+{\color{red} \kappa_3})x_5+{\color{OliveGreen} \kappa_8}{\color{OliveGreen} x_7},\\
               &-({\color{red} \kappa_1}+{\color{OliveGreen} \kappa_9})x_2x_4+({\color{red} \kappa_2}+\kappa_6)x_6+{\color{OliveGreen} \kappa_{10}}{\color{OliveGreen} x_8},\\
               &-\kappa_4x_1x_3+\kappa_5x_5+\kappa_6x_6,\\
               &{\color{red} \kappa_3}x_5+{\color{OliveGreen} \kappa_8}{\color{OliveGreen} x_7}-{\color{red} \kappa_1}x_2x_4+{\color{red} \kappa_2}x_6-{\color{OliveGreen} \kappa_9}x_2x_4+{\color{OliveGreen} \kappa_{10}}{\color{OliveGreen} x_8},\\
               &\kappa_4x_1x_3-(\kappa_5+{\color{OliveGreen} \kappa_7}+{\color{red} \kappa_3})x_5,\\
               &{\color{red} \kappa_1}x_2x_4+{\color{OliveGreen} \kappa_{11}}{\color{OliveGreen} x_8}-({\color{red} \kappa_2}+\kappa_6)x_6,\\
               &{\color{OliveGreen} \kappa_7}x_5-{\color{OliveGreen} \kappa_8}{\color{OliveGreen} {\color{OliveGreen} x_7}},\\
               &{\color{OliveGreen} \kappa_9}x_2x_4-({\color{OliveGreen} \kappa_{10}}+{\color{OliveGreen} \kappa_{11}}){\color{OliveGreen} x_8} \rangle \subseteq \RR(\kappa_1,\dots,\kappa_{11})[x_1,\dots,x_8].
    \end{split}
\end{align}
As can be seen using the evaluation map $\pi^-$, which sets $\kappa_7 = \cdots = \kappa_{11} = 0$, the variables $x_7$ and $x_8$ disappear from the ideal. Further, computing the saturation $I:(x_1\cdots x_8)^\infty$ and the appropriate evaluation maps we can show that the relations of Lemma \ref{lem:SatEval} hold using Macaulay2 \cite{M2}.
\label{ex:IntProcId}
\end{example}


Proposition \ref{propn:BinId} below shows that the image of a binomial ideal under the evaluation map $\pi$ is either the constant ideal (corresponding to an empty variety) or another binomial ideal. The proof can be found in Appendix A.

\begin{proposition}
Let $R = \RR(\kappa_1,\dots,\kappa_{m+m'})[x_1,\dots,x_{n+n'}]$ and define the partial evaluation map $\pi: R \to \RR(\kappa_1,\dots,\kappa_{m})[x_1,\dots,x_{n+n'}]$.
Suppose that $I_{N+1}:(x_1\cdots x_{n+n'})^\infty\subset R$ is a binomial ideal. Then $I_N=\pi(I_{N+1}):(x_1\cdots x_{n})^\infty\subset\RR(\kappa_1,\dots,\kappa_{m})[x_1,\dots,x_n]$ is either a binomial ideal or the ideal $(1)$. Further, for a fixed choice of rate constants $I_{N+1}:(x_1\cdots x_{n+n'})^\infty\subset\RR[x_1,\dots,x_{n+n'}]$. Then if $V(I_{N+1}:(x_1\cdots x_{n+n'})^\infty)\subset \RR^n$ is a toric variety and if $V(I_N:(x_1\cdots x_n)^\infty)\cap \RR_{>0}^n\neq \emptyset$ then $\overline{V(I_N)\cap \RR_{>0}^n}$ is a toric variety.
\label{propn:BinId}
\end{proposition}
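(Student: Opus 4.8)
The plan is to prove the two assertions separately, using the commutation of the evaluation map with saturation (Lemma \ref{lem:SatEval}) together with the closure of binomial ideals under coefficient specialization and saturation for the first claim, and a logarithmic linearization of the strictly positive real variety for the second.

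For the first claim I would start from the hypothesis that $J := I_{N+1}:(x_1\cdots x_{n+n'})^\infty$ is binomial, so it is generated by binomials $c_i x^{b_i^+} - k_i x^{b_i^-}$ with $c_i,k_i \in \RR(\kappa_1,\dots,\kappa_{m+m'})$. After multiplying each generator by a common denominator (a nonzero, hence invertible, element of the coefficient field, so the ideal is unchanged) I may assume $c_i,k_i \in \RR[\kappa_1,\dots,\kappa_{m+m'}]$. Applying $\pi$ sends each generator to $\pi(c_i)x^{b_i^+} - \pi(k_i)x^{b_i^-}$, which is again a binomial, or a monomial when exactly one of $\pi(c_i),\pi(k_i)$ vanishes, or zero when both vanish; hence $\pi(J)$ is a binomial ideal (allowing monomials as degenerate binomials). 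By Lemma \ref{lem:SatEval} the map $\pi$ commutes with saturation, and, as observed after that lemma, the new species variables $x_{n+1},\dots,x_{n+n'}$ drop out of $\pi^-(I_{N+1})$; consequently $I_N$ is a saturation of $\pi(J)$ lying in $\RR(\kappa_1,\dots,\kappa_m)[x_1,\dots,x_n]$. Saturation of a binomial ideal by a monomial is again binomial by the structure theory of \cite{Eisenbud1996}, so $I_N$ is binomial. The only way to leave the class of proper binomial ideals is if the specialization turns some generator into a monomial $\pi(c_i)x^{b_i^+}$ with $\pi(c_i)\neq 0$ whose support is saturated away, producing the unit $\pi(c_i)$ and hence $I_N=(1)$; this is exactly the asserted dichotomy.

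For the second claim I fix numerical rate constants and work over $\RR$. The hypothesis that $V(I_{N+1}:(x)^\infty)$ is toric guarantees that $I_{N+1}:(x)^\infty$ is a prime binomial ideal (Section \ref{sec:Toric}), so the first part applies and $I_N$ is binomial. Set $J_N := I_N:(x_1\cdots x_n)^\infty$; since every coordinate is a unit on $\RR_{>0}^n$, saturation does not change the positive part, so $V(J_N)\cap \RR_{>0}^n = V(I_N)\cap\RR_{>0}^n$. The key step is to linearize: under the group isomorphism $\log:\RR_{>0}^n \to \RR^n$ each binomial equation $c_i x^{b_i^+} = k_i x^{b_i^-}$ becomes the affine-linear equation $(b_i^+ - b_i^-)\cdot \log x = \log(k_i/c_i)$, which is consistent exactly when $c_i$ and $k_i$ share a sign; the nonemptiness hypothesis $V(I_N:(x)^\infty)\cap \RR_{>0}^n \neq \emptyset$ guarantees consistency (and rules out the $(1)$ case). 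Hence $\log\!\left(V(I_N)\cap \RR_{>0}^n\right)$ is a nonempty affine subspace $L + \log x^*$ of $\RR^n$, where $x^*$ is a chosen positive solution and $L$ is the rational solution space of the homogeneous system. Exponentiating, $V(I_N)\cap\RR_{>0}^n$ is the image of the monomial map $t\mapsto (x_1^* t^{a_1},\dots,x_n^*t^{a_n})$ where $A$ is an integer matrix with row space $L$ and columns $a_j$; that is, it is the positive part of the scaled toric variety $X_{A,x^*}$ of Definition \ref{def:ToricVariety}. Taking Zariski closure yields $\overline{V(I_N)\cap\RR_{>0}^n}=X_{A,x^*}$, which is toric.

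I expect the main obstacle to be the linearization step in the second claim, specifically the passage from ``$I_N$ is a binomial ideal'' to ``its strictly positive real variety is exactly the positive part of a single scaled toric variety.'' A binomial ideal over $\RR$ need be neither prime nor have irreducible real variety, and its zero set in $(\RR^*)^n$ can split into several sign-classes; the point is that intersecting with the open orthant $\RR_{>0}^n$ and applying $\log$ collapses this to one affine subspace, so a single monomial parameterization suffices and its closure is automatically an irreducible toric variety. Care is also needed in the first claim to handle the coefficient field $\RR(\kappa)$ by clearing denominators, to track which variables survive the specialization $\pi^-$, and to invoke \cite{Eisenbud1996} correctly for the fact that saturation preserves binomiality.
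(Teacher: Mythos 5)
Your proof of the first assertion follows the paper's own argument essentially verbatim: apply $\pi$ generator-by-generator, observe that each binomial maps to zero, a monomial, or a binomial, and conclude that a surviving monomial forces the unit ideal after saturation while otherwise the result is binomial; like the paper, you lean on Lemma \ref{lem:SatEval} to commute $\pi$ with saturation (your extra appeal to \cite{Eisenbud1996} for saturation preserving binomiality is a reasonable patch for a step the paper leaves implicit). Where you genuinely diverge is the second assertion. The paper disposes of it in one line by citing \cite[Proposition 5.2]{conradi2018multistationarity}: a binomial ideal has at most one irreducible component meeting $\RR_{>0}^n$, so the closure of the positive part is that single (toric) component. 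You instead give a self-contained argument: saturation does not change the positive points, each binomial linearizes under $\log$ to an affine equation on $\RR_{>0}^n$, the nonemptiness hypothesis supplies a base point $x^*$ guaranteeing consistency, and the resulting rational affine subspace exponentiates back to the positive part of a scaled toric variety $X_{A,x^*}$ whose positive points are Zariski dense in it. This buys transparency — it makes explicit \emph{why} only one component survives in the positive orthant and simultaneously produces the monomial parameterization that the rest of the paper (Definition \ref{def:ToricVariety}, Theorem \ref{thm:Agreement}) actually uses — at the cost of a slightly longer argument; the paper's route is shorter but delegates the real content to an external reference and is terse about why the surviving component is itself toric. Both arguments are correct.
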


Hence, by Proposition \ref{propn:BinId} if the intermediate network of the $(N+1)^\text{th}$ and the $N^\text{th}$ member of a family is toric then both, the $N^\text{th}$ and $(N+1)^\text{th}$ members are either toric or have empty positive steady state varieties. 

\begin{example}[Example \ref{ex:IntProcId} cont.]
Explicit computations is Macaulay2 show that the ideal $I:(x_1\cdots x_8)^\infty$ is toric. We also know that the processive phosphorylation network has positive steady states. Hence, the positive steady states of the $N=1$ and $N=2$ networks are cut out by toric varieties.
\end{example}

A relationship between the $A$-matrices of successive members of toric families is now found by considering the exponent matrices of the binomial ideals and their corresponding $A$-matrices.

\begin{theorem}
Let $I_{N+1}:(x_1\cdots x_{n+n'})^\infty \subset \RR(\kappa_1,\dots,\kappa_{m+m'})[x_1,\dots,x_{n+n'}]$ be a binomial ideal defining the positive steady states of the $(N+1)^\text{th}$ member of a family of reaction networks. Also let $\pi$ be the evaluation map which sends $\kappa_{m+1}=\cdots=\kappa_{m+m'}=0 $ and assume that $I_N=\pi(I_{N+1}) \subset \RR(\kappa_1,\dots,\kappa_m)[x_1,\dots,x_n]$. Then $I_N:(x_1\cdots x_n)^\infty$ is binomial. Further, for a choice of generators, let $B_{N+1}$, $B_N$ be the matrices associated to the exponents of the binomial ideals $I_{N+1}$ and $I_N$. Similarly let $A_{N+1}$ and $A_N$ be the $A$-matrices corresponding to $B_{N+1}$ and $B_N$. Then:
\begin{enumerate}
\item[(i)] $B_N$ is a submatrix of $B_{N+1}$ and,
\item[(ii)] $A_N$ is a submatrix of $A_{N+1}$, if the submatrix of $B_{N+1}$ formed by the columns which have at least one non-zero entry in the last $n'$ rows has rank at most $n'$.
\end{enumerate}
\label{thm:Submatrix}
\end{theorem}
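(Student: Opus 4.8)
The plan is to recast both claims at the level of the integer kernel lattices attached to the toric ideals and to track how these lattices transform under the evaluation map $\pi$. First, the binomiality of $I_N:(x_1\cdots x_n)^\infty$ is exactly the content of Proposition \ref{propn:BinId} (together with Lemma \ref{lem:SatEval}), so I would invoke it directly and assume throughout that $J_{N+1}:=I_{N+1}:(x_1\cdots x_{n+n'})^\infty$ and $J_N:=I_N:(x_1\cdots x_n)^\infty$ are prime binomial ideals. By \eqref{eq:BinomialIdeal} these are the lattice ideals of the saturated lattices $L_{N+1}=\ker(A_{N+1})\cap\ZZ^{n+n'}$ and $L_N=\ker(A_N)\cap\ZZ^{n}$, and the columns of $B_{N+1}$ (resp. $B_N$) generate $L_{N+1}$ (resp. $L_N$) as a lattice; equivalently $\mathrm{colspan}_{\RR}(B_{N+1})=\ker_{\RR}(A_{N+1})$, and likewise for $N$.

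For part (i) I would first identify the lattice operation induced by ``evaluate then saturate''. Because of Condition 4 of Definition \ref{def:recursiveG}, every reaction producing a new species carries one of the rate constants $\kappa_{m+1},\dots,\kappa_{m+m'}$; setting these to zero via $\pi$ and saturating (using Lemma \ref{lem:SatEval}) removes precisely the binomial relations that involve the new variables $x_{n+1},\dots,x_{n+n'}$. At the lattice level this says $L_N\cong L_{N+1}\cap(\ZZ^{n}\oplus 0)$, where the isomorphism simply deletes the (zero) last $n'$ coordinates. Since a coordinate sublattice $\ZZ^{n}\oplus 0$ is saturated and the intersection of saturated lattices is saturated, $L_{N+1}\cap(\ZZ^{n}\oplus 0)$ is primitive in $L_{N+1}$; hence a lattice basis of it extends to a lattice basis of $L_{N+1}$. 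Choosing the generators of $J_{N+1}$ to realise such a basis, the columns coming from the sublattice have zero in their last $n'$ entries and, after those entries are deleted, are exactly a generating set for $L_N$, i.e.\ the columns of $B_N$. This exhibits $B_N$ as a submatrix of $B_{N+1}$ and proves (i).

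For part (ii) I would write $A_{N+1}=[\,A^{L}\mid A^{R}\,]$, with $A^{L}$ the first $n$ columns and $A^{R}$ the last $n'$. Applying $A_{N+1}B_{N+1}=0$ to the columns of $B_{N+1}$ produced in (i), which have the form $(b,0)^{T}$ with $b$ a column of $B_N$, gives $A^{L}B_N=0$. Thus every row of $A^{L}$ lies in $\mathrm{colspan}_{\RR}(B_N)^{\perp}$, which is precisely the row space of any valid $A$-matrix $A_N$ for $J_N$. It therefore suffices to show $\mathrm{rank}(A^{L})=d_N=n-\mathrm{rank}(B_N)$: then any $d_N$ independent rows of $A^{L}$, together with the first $n$ columns, form a copy of $A_N$ sitting inside $A_{N+1}$. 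This is the step where the hypothesis enters. Splitting $B_{N+1}$ into the part $B^{I}$ obtained in (i) and the part $B^{II}$ whose columns have a nonzero entry among the last $n'$ rows, the bound $\mathrm{rank}(B^{II})\le n'$ limits how much the dimension can grow when passing from $N$ to $N+1$ (concretely one obtains a relation of the form $d_{N+1}=d_N+n'-\mathrm{rank}(B^{II}_{\mathrm{bot}})$, with $B^{II}_{\mathrm{bot}}$ the last $n'$ rows of $B^{II}$), and this is exactly what pins $\mathrm{rank}(A^{L})$ to $d_N$ rather than to the a priori larger value $d_{N+1}$.

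The main obstacle is this last rank bookkeeping in (ii): the naive estimate from deleting $n'$ columns only gives $\mathrm{rank}(A^{L})\ge d_{N+1}-n'$, which is too weak, so one must argue the sharper equality. I expect the cleanest route is to compute $\ker(A^{L})=\{v\in\RR^{n}:(v,0)^{T}\in\ker_{\RR}(A_{N+1})\}=\{v:(v,0)^{T}\in\mathrm{colspan}_{\RR}(B_{N+1})\}$ and, using the saturatedness established in (i), to identify this kernel with $\mathrm{colspan}_{\RR}(B_N)$; the hypothesis $\mathrm{rank}(B^{II})\le n'$ is what guarantees that the extra columns $B^{II}$ contribute no element to $\ker(A^{L})$ beyond $\mathrm{colspan}_{\RR}(B_N)$, giving $\dim\ker(A^{L})=\mathrm{rank}(B_N)$ and hence $\mathrm{rank}(A^{L})=d_N$. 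A secondary subtlety, already needed in (i), is to justify rigorously that ``apply $\pi$ then saturate'' corresponds at the lattice level to intersecting with the coordinate subspace; this is where Lemma \ref{lem:SatEval} and Proposition \ref{propn:BinId} do the essential work and should be cited rather than re-proved.
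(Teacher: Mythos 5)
Your part (i) reaches the right conclusion but by a heavier route than necessary, and it rests on an identification you do not justify: you assert $L_N\cong L_{N+1}\cap(\ZZ^{n}\oplus 0)$ and then invoke saturatedness/primitivity to extend a basis. Proposition \ref{propn:BinId} only gives that the generators of $I_N:(x_1\cdots x_n)^\infty$ appear among the images under $\pi$ of the chosen generators of $I_{N+1}:(x_1\cdots x_{n+n'})^\infty$; this yields an inclusion of the lattice spanned by the surviving columns into $L_{N+1}\cap(\ZZ^{n}\oplus 0)$, not the equality your basis-extension step needs. The paper's proof of (i) avoids all of this: the surviving generators contain none of the new variables, so their exponent columns have zeros in the last $n'$ rows, and $B_N$ is literally that set of columns with the zero rows deleted. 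For (i) you should argue directly from Proposition \ref{propn:BinId} and drop the lattice machinery.

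The genuine gap is in (ii). You correctly reduce the claim to showing $\ker(A^{L})=\mathrm{colspan}_{\RR}(B_N)$, equivalently $\mathrm{rank}(A^{L})=d_N$, but you flag this as ``the main obstacle'' and the plan you sketch does not close it. Writing $(v,0)=B_N u+B^{II}w$ for $(v,0)\in\mathrm{colspan}_{\RR}(B_{N+1})$, the last $n'$ coordinates force only $B^{II}_{\mathrm{bot}}w=0$; the hypothesis $\mathrm{rank}(B^{II})\le n'$ does not force $w=0$, nor does it force $B^{II}_{\mathrm{top}}w\in\mathrm{colspan}_{\RR}(B_N)$, since $B^{II}_{\mathrm{bot}}$ may be singular while $B^{II}_{\mathrm{top}}$ is nonzero on its kernel. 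So the extra elements of $\ker(A^{L})$ are not excluded by rank counting alone, and your claimed identity $d_{N+1}=d_N+n'-\mathrm{rank}(B^{II}_{\mathrm{bot}})$ is likewise only an inequality in general. The paper's proof runs in the opposite direction and uses the hypothesis differently: it takes each row $a_i$ of $A_N$ (a left-kernel vector of $B_N$, possibly the zero vector) and extends it to a row $(a_i,\bar a_i)$ of $A_{N+1}$ by solving the linear system $\bar a_i B^{II}_{\mathrm{bot}}=-a_i B^{II}_{\mathrm{top}}$, and it is the consistency of this system that the rank-at-most-$n'$ condition is invoked for (via Rouch\'e--Capelli, as in Corollary \ref{cor:ADel}). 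If you wish to keep your ``extract $A_N$ from $A_{N+1}$'' direction, you must supply the missing argument that no vector $B^{II}_{\mathrm{top}}w$ with $B^{II}_{\mathrm{bot}}w=0$ escapes $\mathrm{colspan}_{\RR}(B_N)$; as written that step is asserted, not proved.
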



The proof can be found in Appendix A. An interesting Corollary of the above Theorem occurs if the number of conservation relations is constant.
\begin{corollary}
	Suppose the assumption of Remark \ref{rem:assump} holds and that the number of conservation relations remains constant. Then, the matrix $A_{N}$ is obtained from $A_{N+1}$ by deleting the last $n'$ of its columns.
	\label{cor:ADel}
\end{corollary}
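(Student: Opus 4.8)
The plan is to combine Theorem \ref{thm:Submatrix} with a dimension count that is forced by the two hypotheses. First I would record that, under the assumption $d'=0$ of Remark \ref{rem:assump}, Proposition \ref{prop:DimAndConserveRelations} identifies the dimension of the toric steady-state variety of each family member with its number of conservation relations. Since the dimension of a toric variety $X_A$ equals $\mathrm{rank}(A)$, and since we are free to take each $A$-matrix with full row rank, the number of rows of $A_N$ (resp.\ $A_{N+1}$) equals the number of conservation relations of the $N$th (resp.\ $(N+1)$th) network. The hypothesis that this number is constant then gives that $A_N$ and $A_{N+1}$ have the same number of rows, say $d$.

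Next I would verify the rank hypothesis of Theorem \ref{thm:Submatrix}(ii). Because the saturated steady-state ideals are toric, the exponent matrices satisfy $\mathrm{rank}(B_{N+1}) = (n+n')-d$ and $\mathrm{rank}(B_N)=n-d$, these being the codimensions of the respective varieties. By part (i) of the theorem, $B_N$ is exactly the first $n$ rows of those columns of $B_{N+1}$ whose last $n'$ entries vanish, so these columns already span an $(n-d)$-dimensional space. Hence the columns of $B_{N+1}$ having a nonzero entry in the last $n'$ rows (call this submatrix $B_{N+1}^{(1)}$) must account for the remaining $n'$ dimensions of $\ker(A_{N+1})$, and by choosing the binomial generators so that their exponent vectors form a lattice basis adapted to the coordinate splitting $\ZZ^{n+n'} = \ZZ^n \times \ZZ^{n'}$ --- which is possible because $\ker_\ZZ(A_{N+1}) \cap (\ZZ^n \times \{0\})$ is the intersection of a saturated lattice with a coordinate subspace and is therefore a direct summand of $\ker_\ZZ(A_{N+1})$ --- the matrix $B_{N+1}^{(1)}$ can be taken to have rank exactly $n'$. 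Thus the hypothesis of Theorem \ref{thm:Submatrix}(ii) holds and $A_N$ is a submatrix of $A_{N+1}$.

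Finally I would translate ``submatrix'' into ``delete the last $n'$ columns.'' Since $A_N$ and $A_{N+1}$ have the same number of rows $d$, no rows are removed in passing from $A_{N+1}$ to $A_N$; only $(n+n')-n = n'$ columns are deleted. As the columns of an $A$-matrix are indexed by the chemical species and the $(N+1)$th network is obtained from the $N$th by adjoining precisely the $n'$ new species $x_{n+1},\dots,x_{n+n'}$ (Condition 4 of Definition \ref{def:recursiveG}), these occupy the last $n'$ columns, which proves the claim.

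I expect the main obstacle to be the middle step: showing that the constant-conservation-relation assumption forces the new variables to contribute rank exactly $n'$ to the exponent matrix, so that the sufficient condition of Theorem \ref{thm:Submatrix}(ii) is genuinely met rather than merely the inequality $\mathrm{rank}(B_{N+1}^{(1)}) \ge n'$ that holds automatically. The equality of dimensions coming from $d'=0$, together with the lattice-theoretic choice of generators, is what upgrades this to the needed equality.
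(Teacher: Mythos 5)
Your proposal is correct and follows essentially the same route as the paper: both arguments use the constant number of conservation relations (together with the $d'=0$ assumption of Remark \ref{rem:assump}) to force $\mathrm{rank}(\tilde{B})=n'$ for the block $\tilde{B}$ of $B_{N+1}$, so that Theorem \ref{thm:Submatrix}(ii) applies, and the equality of the number of rows of $A_N$ and $A_{N+1}$ then means only the last $n'$ columns are removed. The paper's own proof is terser---it asserts the equivalence of equal left-kernel dimensions with $\mathrm{rank}(\tilde{B})=n'$ and closes by invoking the Rouch\'e--Capelli theorem---whereas you supply the lattice-splitting justification for that rank equality explicitly, but the underlying mechanism is identical.
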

\begin{proof}
	When the dimensions of the varieties of the networks $\mathcal{N}_{N+1}$ and $\mathcal{N}_N$ are the same, then so do the dimensions of the left kernels of their $B$-matrices. Hence, by Theorem \ref{thm:Submatrix} this is the case if and only if $\text{rank}(\tilde{B}) = n'$ and the result follows by the Rouch{\'e}-Capelli theorem.
\end{proof}
This section is concluded by relating the mathematical insights of the above theorems to families of toric chemical reaction networks.

\begin{theorem}
The $A$-matrices of members of a family of chemical reaction networks are submatrices of each other when the assumption of Remark \ref{rem:assump} is valid, all networks considered have the same number of conservation relations and one of the following conditions hold:
\begin{enumerate}
    \item[(i)] The family consists of a sequence of subnetworks with toric steady states.
    \item[(ii)] The family is toric with toric intermediates.
\end{enumerate}
\label{thm:AFamSub}
\end{theorem}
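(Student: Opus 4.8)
The plan is to derive the statement from Theorem~\ref{thm:Submatrix} and Corollary~\ref{cor:ADel} applied to the evaluation maps relating consecutive members, so I would first reduce to comparing two consecutive members $\mathcal{N}_N$ and $\mathcal{N}_{N+1}$. The constant number $d$ of conservation relations together with the assumption of Remark~\ref{rem:assump} forces every toric variety in sight to have dimension exactly $d$, so each $A$-matrix has $d$ rows; and if $A_N$ is always $A_{N+1}$ with its trailing $n'$ columns removed, the submatrix relation propagates across the whole family by composing these column deletions. Hence it is enough to establish the column-deletion identity for one step.

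For case (i) the family is nested ($Y=\emptyset$ in Definition~\ref{def:recursiveG}), so the intermediate network $\mathcal{G}_N^{\rm int}$ coincides with $\mathcal{N}_{N+1}$ and the evaluation map $\pi$ of Theorem~\ref{thm:Submatrix} sends the $m'$ new rate constants to zero, with $I_N=\pi(I_{N+1})$ after the $n'$ new species drop out (justified by the discussion following Definition~\ref{def:InterNet} and by Lemma~\ref{lem:SatEval}). Both saturated ideals are binomial since the family is toric, so Theorem~\ref{thm:Submatrix} gives that $B_N$ is a submatrix of $B_{N+1}$, and the constant-conservation-relation hypothesis lets Corollary~\ref{cor:ADel} upgrade this to the conclusion that $A_N$ is $A_{N+1}$ with its last $n'$ columns deleted. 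This settles (i).

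For case (ii) edges are deleted ($Y\neq\emptyset$), and I would bridge through the toric intermediate $\mathcal{G}_N^{\rm int}$, writing $A_{\mathrm{int}}$ for its $A$-matrix and using both maps $\pi^-,\pi^+$ of \eqref{eq:pim}. Applying $\pi^-$ removes the new reactions and, after saturation via Lemma~\ref{lem:SatEval}, the $n'$ new species, so exactly as in (i) one obtains $A_N$ equal to $A_{\mathrm{int}}$ with its last $n'$ columns removed. Applying $\pi^+$ instead deletes only the reactions $Y$, which join complexes already built from the existing $n+n'$ species; hence no species vanishes, and with the intermediate in the role of the larger network this is the instance of Theorem~\ref{thm:Submatrix} with zero new species, so Corollary~\ref{cor:ADel} yields $A_{N+1}=A_{\mathrm{int}}$. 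Composing the two identities presents $A_N$ as the first $n$ columns of $A_{N+1}$, which is the claim.

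The step I expect to be the main obstacle is the $\pi^+$ half of case (ii). One must check that deleting $Y$ while retaining all $n+n'$ species keeps the saturated ideal binomial rather than collapsing it to $(1)$; this is precisely where Proposition~\ref{propn:BinId} and the toric-intermediate hypothesis enter, the nonvanishing of the positive steady states ruling out the degenerate image. One must also verify that removing $Y$ leaves the dimension unchanged, so that the constant-conservation-relation hypothesis genuinely licenses Corollary~\ref{cor:ADel} to conclude $A_{N+1}=A_{\mathrm{int}}$ and not merely a proper submatrix relation. A final bookkeeping point is that Theorem~\ref{thm:Submatrix} is phrased for a map zeroing the trailing block of rate constants, so for $\pi^+$ I would first relabel the rates, as anticipated by \eqref{eq:muFunc}, to put the deleted rates $\mu(Y)$ into that trailing block.
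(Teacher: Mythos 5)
Your proposal is correct and follows essentially the same route as the paper: part (i) is read off from Theorem~\ref{thm:Submatrix} (with Corollary~\ref{cor:ADel} supplying the exact column-deletion form), and part (ii) is handled by passing through the toric intermediate, obtaining $A_N$ from $A_{\mathrm{int}}$ via $\pi^-$ and identifying $A_{N+1}$ with $A_{\mathrm{int}}$ because they share the same species set and, by the constant-conservation-relation hypothesis and Proposition~\ref{prop:DimAndConserveRelations}, the same dimension. Your explicit flagging of the $\pi^+$ step (binomiality surviving the deletion of $Y$, and the rate-constant relabelling) is a point the paper's proof passes over silently, but it is the same argument.
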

The proof can be found in Appendix A.

\begin{example}[Processive Phosphorylation (cont.)]
The processive network is toric with toric intermediates, every family member has a finite number of positive steady states for each set of initial conditions and the number of conservation relations is constant. The $A$-matrix of the intermediate network of Example \ref{ex:ProcIntNet} is given by
$$
A = \begin{pmatrix}
      {-1}&0&1&0&0&0&{\color{red}0}&{\color{red}0}\\
      0&{-1}&0&1&0&0&{\color{red}0}&{\color{red}0}\\
      1&1&0&0&1&1&{\color{red}1}&{\color{red}1}
      \end{pmatrix},
$$ which is identical to the $A$-matrix of the $N=2$ network. The $A$-matrix of the $N=1$ member is obtained by deleting the last two columns (marked in red) of the $A$-matrix associated to the intermediate network.
\end{example}

\section{Matroid Theory for Toric Chemical Reaction Networks}\label{sec:matroids}

In this section we study biochemical and algebraic question 2 regarding parameter estimation and model rejection. We start by showing the equivalence of two matroids, an algebraic matroid and a much simpler linear matroid. We then decorate the linear matroid with binomials and finally apply these results to model rejection.
First, we prove the that following is a matroid.
\begin{proposition}
Let $E = \{x_1,\dots,x_n\}$, $S = \{x_{i_1},\dots,x_{i_j}\}\subseteq E$ and define $\phi_A(S)$ by $\phi_A(S) = \{\phi_A(x_{i_1}),\dots,\phi_A(x_{i_j})\}$, where $\phi_A$ is the parameterization map of Definition \ref{def:ParamMap}. The ground set $E$ with the set $$\mathcal{I} = \{S\subseteq E:
\text{the monomials in } \phi_A(S) \text{are algebraically independent}\}$$ is a matroid $\mathcal{M}(E,\mathcal{I})$. Further, this matroid is isomorphic to the matroid defined by the column vectors of the $A$-matrix.
\label{propn:PSSmatroid}
\end{proposition}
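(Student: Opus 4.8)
The plan is to prove the statement in two pieces, mirroring the two claims: first that $(E,\mathcal{I})$ is a matroid, and second that it is isomorphic to the linear matroid of the columns of $A$. I would establish the isomorphism directly and let it do most of the work, rather than verifying the matroid axioms from scratch for the algebraic-independence structure.

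First I would unpack what $\phi_A$ does to each variable. By Definition \ref{def:ParamMap}, $\phi_A(x_i) = x_i^* t^{a_i}$, a scalar multiple of the Laurent monomial $t^{a_i}$ whose exponent vector is the $i$-th column $a_i$ of $A$. The key elementary observation I would prove is the following dictionary: a set of monomials $\{x_{i_1}^* t^{a_{i_1}}, \dots, x_{i_j}^* t^{a_{i_j}}\}$ in $\CC[t_1^\pm,\dots,t_d^\pm]$ is algebraically independent over $\CC$ if and only if the exponent vectors $\{a_{i_1},\dots,a_{i_j}\}$ are linearly independent over $\QQ$ (equivalently over $\RR$). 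The nonzero scalars $x_i^*$ are irrelevant to algebraic dependence, so this reduces everything to the pure Laurent monomials $t^{a_i}$. I expect this equivalence to be the main obstacle, or at least the only step requiring genuine argument.

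For the direction that linear dependence forces algebraic dependence: if $\sum_k c_k a_{i_k} = 0$ is a nontrivial integer relation (clearing denominators from a $\QQ$-relation), then splitting into positive and negative parts gives a monomial relation $\prod_{c_k>0}(t^{a_{i_k}})^{c_k} = \prod_{c_k<0}(t^{a_{i_k}})^{-c_k}$, which is a nonzero binomial vanishing on the monomials, exhibiting an algebraic dependence. For the converse, that linearly independent exponent vectors give algebraically independent monomials, the cleanest route is to use the fact that the monomial map $t \mapsto (t^{a_{i_1}},\dots,t^{a_{i_j}})$ has image whose Zariski closure is the toric variety $X_{A'}$ of the submatrix $A'$ with these columns, and its dimension equals $\operatorname{rank}(A') = j$ when the columns are independent; a collection of $j$ functions whose image has dimension $j$ is algebraically independent. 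Alternatively one argues via the transcendence degree of the field $\CC(t^{a_{i_1}},\dots,t^{a_{i_j}})$, which equals the rank of the integer lattice spanned by the $a_{i_k}$. I would present the dimension/transcendence-degree argument, citing the monomial-map description of $X_A$ from Section \ref{sec:Toric}.

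With this dictionary in hand, the isomorphism is immediate: the map sending $x_i \in E$ to the column vector $a_i$ is a bijection from the ground set onto the columns of $A$, and by the equivalence just proved it sends independent sets of $\mathcal{M}(E,\mathcal{I})$ exactly to linearly independent sets of columns and vice versa. Since the column vectors of $A$ form a linear (hence honest) matroid $\mathcal{M}_A$ over $\RR$, and independence is preserved and reflected under this ground-set bijection, $\mathcal{M}(E,\mathcal{I})$ inherits all three matroid axioms from $\mathcal{M}_A$ by transport of structure. This simultaneously proves that $(E,\mathcal{I})$ is a matroid and that it is isomorphic to the linear matroid of $A$, completing both assertions at once and avoiding a separate axiom check.
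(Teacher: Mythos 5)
Your proposal is correct and follows essentially the same route as the paper: both reduce the claim to the equivalence ``a set of Laurent monomials is algebraically independent if and only if their exponent vectors are linearly independent,'' and then identify $\mathcal{M}(E,\mathcal{I})$ with the linear matroid on the columns of $A$. The only difference is that the paper cites this equivalence (Lemma 4.2.10 of Mittmann's thesis) while you supply a self-contained proof of it via the binomial relation in one direction and transcendence degree in the other, which is a fine substitute.
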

\begin{proof}
Consider the image of $E$ under $\phi_A$; this is a set of monomials.  A set $S$ of monomials is algebraically independent if and only if there exists no polynomial $p \in k[t^\pm_1,\dots,t^\pm_d]$ such that $p(S) = 0$. This is exactly the condition to give an independent set $i\in \mathcal{I}$. Further, by \cite[Lemma 4.2.10]{Mittmann2013}, a set of monomials is independent if and only if their exponent vectors are linearly independent. Hence, algebraic independence of $\phi_A(S)$ is equivalent to linear independence of the columns of the matrix $A$ corresponding to $S$.
\end{proof}

\begin{definition}
The matroid $\mathcal{M}(A)$ defined by the column vectors, $a_i$, of a full rank integer matrix $A$ encodes the algebraic dependencies of the concentrations at a non-zero (usually positive) steady state of a chemical reaction network and is therefore is called the \emph{positive steady state (PSS) matroid}.\label{def:PSSMatroid}
\end{definition}

The next theorem shows that the PSS matroid and the algebraic matroid defined by the binomial ideal are identical. Therefore, the algebraic matroid can be studied directly using linear algebra operations on the columns of the $A$-matrix defining the PSS matroid. This way bases, circuits and even circuit polynomials can be inferred trivially.

\begin{theorem}
Let $I_b\subseteq R = \CC[x_1,\dots,x_n]$ be a prime binomial ideal defining a toric variety $X_{A,x^*}=V(I_b)$ where $A\in \ZZ^{d\times n}$ is the exponent matrix of the corresponding monomial parameterization. Denote the algebraic matroid defined by $I_b$ as $\mathcal{M}(I_b)$ and the matroid defined by the linear independence of the columns of $A$ by $\mathcal{M}(A)$. Then $\mathcal{M}(I_b) = \mathcal{M}(A)$.

\label{thm:MatroidHom}
\end{theorem}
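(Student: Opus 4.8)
The plan is to realize both matroids through the single parameterization map $\phi_A$ of Definition \ref{def:ParamMap} and to transport the notion of independence across it. First I would make the algebraic matroid $\mathcal{M}(I_b)$ concrete: its ground set is $\{x_1,\dots,x_n\}$, and a subset $S$ is independent exactly when the residue classes $\{\bar x_i : x_i\in S\}$ are algebraically independent over $\CC$ in the fraction field $\mathrm{Frac}(\CC[x_1,\dots,x_n]/I_b)$. Because $I_b$ is prime the quotient is a domain, so its fraction field is well defined and algebraic independence makes sense there.

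The key structural observation is that $\phi_A$ is, after passing to the quotient, the inclusion of the coordinate ring of $X_{A,x^*}$ into a Laurent polynomial ring. Concretely I would show $I_b=\ker\phi_A$. For a polynomial $p$, its image $\phi_A(p)=p(x_1^*t^{a_1},\dots,x_n^*t^{a_n})$ is zero as an element of $\CC[t_1^\pm,\dots,t_d^\pm]$ if and only if it vanishes at every $t\in(\CC^*)^d$ (since $(\CC^*)^d$ is Zariski dense), that is, if and only if $p$ vanishes on the image of the monomial map $\psi_A$ of Definition \ref{def:ToricVariety}, hence on its Zariski closure $X_{A,x^*}=V(I_b)$. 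Thus $\ker\phi_A$ is exactly the vanishing ideal $I(X_{A,x^*})$; by the Nullstellensatz this equals $\sqrt{I_b}$, and since $I_b$ is prime, hence radical, we get $I_b=\ker\phi_A$. Consequently $\phi_A$ descends to an injective $\CC$-algebra homomorphism $\CC[x_1,\dots,x_n]/I_b\hookrightarrow \CC[t_1^\pm,\dots,t_d^\pm]$ sending $\bar x_i\mapsto \phi_A(x_i)=x_i^*t^{a_i}$, which extends to an embedding of the corresponding fraction fields over $\CC$.

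The second step uses the standard fact that a field embedding fixing $\CC$ both preserves and reflects algebraic independence: elements of a field extension of $\CC$ are algebraically independent over $\CC$ if and only if their images under such an embedding are (forward by applying the embedding to a vanishing polynomial relation, backward by injectivity). Applying this to the embedding above, a set $S$ is independent in $\mathcal{M}(I_b)$ if and only if the monomials $\{\phi_A(x_i):x_i\in S\}$ are algebraically independent over $\CC$. This is precisely the independence condition defining the PSS matroid of Proposition \ref{propn:PSSmatroid}, which that proposition already identifies with linear independence of the corresponding columns of $A$, i.e.\ with independence in $\mathcal{M}(A)$. Since the two matroids share the ground set $\{x_1,\dots,x_n\}$ (identified with the columns of $A$ via $x_i\leftrightarrow a_i$) and have the same independent sets, they are equal.

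I expect the only genuinely delicate point to be the identification $I_b=\ker\phi_A$, where primeness of $I_b$ and the reducedness built into the definition of a toric variety are essential; the remainder is the formal, and already essentially established, passage between algebraic independence of monomials and linear independence of their exponent vectors. The nonzero scalars $x_i^*$ are harmless for the matroid, since rescaling a monomial by a nonzero constant leaves its exponent vector, and hence every algebraic (in)dependence, unchanged, consistent with $\mathcal{M}(A)$ being independent of $x^*$.
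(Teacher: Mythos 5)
Your argument is correct and follows essentially the same route as the paper: both identify $I_b$ with the implicitization ideal (equivalently, the kernel) of the parameterization map $\phi_A$, so that algebraic dependence of variables modulo $I_b$ becomes algebraic dependence of the monomials $\phi_A(x_i)$, which Proposition \ref{propn:PSSmatroid} converts to linear dependence of the columns of $A$. Your write-up is simply a more careful version of the paper's proof, making explicit the role of primeness and the fact that a field embedding over $\CC$ preserves and reflects algebraic independence.
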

The proof can be found in Appendix A.
We now proceed to find decorations of the PSS matroid's circuits which reveal the full power of the PPS matroid formulation.

\begin{definition}[Laurent Binomial Associated to a PSS matroid circuit] Let $X_{A,x^*}$ be the toric variety defined by a full rank matrix $A$ and a positive vector $x^*$ as in Definition \ref{def:ToricVariety}. Let $\mathcal{M}(A)$ be the associated  PSS matroid (Definition \ref{def:PSSMatroid}).  Consider a circuit $C = \{a_{i_1},\dots,a_{i_{j-1}}\}\cup \{a_{i_j}\} \subseteq E$ of the matroid $\mathcal{M}(A)$, where $\{a_{i_1},\dots,a_{i_{j-1}}\}\in \mathcal{I}$. We define the Laurent polynomial 
$$
\Phi(C)= x_{i_j}^{\lambda_{i_j}} - \left(x^*_{i_j}\right)^{\lambda_{i_j}}\left(\prod_{l=1}^{j-1} \left(x^*_{i_l}\right)^{-\lambda_{i_l}}\right)\prod_{l=1}^{j-1} x_{i_l}^{\lambda_{i_l}} \in \RR(\kappa_1,\dots,\kappa_m)[x_1^{\pm 1},\dots,x_n^{\pm 1}]
$$ with $\lambda_{i_l}\in \ZZ$ chosen such that $\sum_{l=1}^{j-1}\lambda_{i_l} a_{i_l} = {\lambda_{i_j}}a_{i_j}$ and ${\lambda_{i_j}}$ is positive (this is possible since $C$ is a circuit). The expression $\Phi(C)$ is called the Laurent binomial associated to $C$.
\label{def:LaurentBin}
\end{definition}

\begin{definition}[Binomial Associated to a PSS matroid circuit] Let $\Phi(C)$ be a Laurent binomial of a circuit $C$ of a PSS matroid as in Definition \ref{def:LaurentBin}. The binomial associated to $C$ is $\overline{\Phi(C)}$ which is $\Phi(C)$ with the denominator cleared, i.e. $$\overline{\Phi(C)}= x_{i_j}^{\lambda_{i_j}}x^{\lambda^-} - \left(x^*_{i_j}\right)^{\lambda_{i_j}} \left(\prod_{l=1}^{j-1} \left(x^*_{i_l}\right)^{-\lambda_{i_l}}\right)x^{\lambda^+} \in \RR(\kappa_1,\dots,\kappa_m)[x_1,\dots,x_n],$$ where $\lambda^+_j=\lambda_j$ if $\lambda_j>0$ and zero otherwise and where $\lambda^-_j=|\lambda_j|$ if $\lambda_j<0$ and zero otherwise.
\label{def:PSSBin}
\end{definition}

\begin{example}
For the $N=1$ site processive phosphorylation network the columns $C = \{a_1,a_2,a_3\}\cup\{a_4\}$ form a circuit with a linear relation $a_1-a_2+a_3 = a_4$. The Laurent binomial associated to this circuit is $$\Phi(C) = x_4 - \frac{x_4^*x_2^*}{x_1^*x_3^*}x_1x_3x_2^{-1},$$ and clearing the denominator gives the binomial $$\overline{\Phi(C)} = x_4x_2 - \frac{x_4^*x_2^*}{x_1^*x_3^*}x_1x_3.$$
\end{example}

By the following lemma the (Laurent) binomials associated to a PSS matroid are zero on the steady state variety. Hence, they form a model invariant.
\begin{lemma}
Let $X_{A,x^*}$ be the toric variety defined by a full rank matrix $A$ and a positive vector $x^*$ as in Definition \ref{def:ToricVariety} and let $\mathcal{M}(A)$ be the associated PSS matroid. If $C$ is a circuit in $\mathcal{M}(A)$ then $\Phi(C)(x)=0$ if $x\in (\CC^*)^n \cap X_{A,x^*} $.
\label{lem:parameterization}
\end{lemma}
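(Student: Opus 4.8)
、The plan is to show that the Laurent binomial $\Phi(C)$ vanishes on the torus points of the scaled toric variety by exploiting the defining monomial parameterization. Recall from Definition \ref{def:ToricVariety} that every point of $X_{A,x^*} \cap (\CC^*)^n$ is of the form $x = \psi_A(t) = (x_1^* t^{a_1}, \dots, x_n^* t^{a_n})$ for some $t \in (\CC^*)^d$; equivalently, it suffices to check that $\Phi(C)$ pulls back to zero under the parameterization map $\phi_A$ of Definition \ref{def:ParamMap}, since $\phi_A$ is precisely the coordinate-wise description of $\psi_A$. So the goal reduces to the purely algebraic identity $\phi_A(\Phi(C)) = 0$ in the Laurent ring $\CC[t_1^{\pm},\dots,t_d^{\pm}]$.

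First I would substitute $\phi_A(x_{i_l}) = x_{i_l}^* t^{a_{i_l}}$ into the two terms of $\Phi(C)$ as written in Definition \ref{def:LaurentBin}. The first term $x_{i_j}^{\lambda_{i_j}}$ maps to $(x_{i_j}^*)^{\lambda_{i_j}} t^{\lambda_{i_j} a_{i_j}}$. For the second term, the product $\prod_{l=1}^{j-1} x_{i_l}^{\lambda_{i_l}}$ maps to $\left(\prod_{l=1}^{j-1} (x_{i_l}^*)^{\lambda_{i_l}}\right) t^{\sum_{l=1}^{j-1}\lambda_{i_l} a_{i_l}}$, so that once the explicit scalar prefactor $(x^*_{i_j})^{\lambda_{i_j}}\prod_{l=1}^{j-1}(x^*_{i_l})^{-\lambda_{i_l}}$ is multiplied in, the starred factors combine to exactly $(x_{i_j}^*)^{\lambda_{i_j}}$, matching the scalar of the first term.

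The key step is then the exponent bookkeeping in the $t$ variables: the second term carries the monomial $t^{\sum_{l=1}^{j-1}\lambda_{i_l} a_{i_l}}$, and by the defining relation of the circuit, $\sum_{l=1}^{j-1}\lambda_{i_l} a_{i_l} = \lambda_{i_j} a_{i_j}$, this is identical to $t^{\lambda_{i_j} a_{i_j}}$, the monomial of the first term. Hence both terms of $\phi_A(\Phi(C))$ have the same scalar coefficient $(x_{i_j}^*)^{\lambda_{i_j}}$ and the same Laurent monomial $t^{\lambda_{i_j} a_{i_j}}$, so their difference is zero. Since $\Phi(C)$ vanishes identically along the parameterization, it vanishes on the dense torus image $\psi_A((\CC^*)^d)$, and therefore on every $x \in (\CC^*)^n \cap X_{A,x^*}$.

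I do not anticipate a genuine obstacle here: the existence of integers $\lambda_{i_l}$ with $\sum_{l=1}^{j-1}\lambda_{i_l} a_{i_l} = \lambda_{i_j} a_{i_j}$ and $\lambda_{i_j}>0$ is guaranteed by Definition \ref{def:LaurentBin} (it is exactly the circuit condition, since the $a_{i_l}$ for $l<j$ are independent while $C$ is minimally dependent, forcing a unique-up-to-scale integer dependency). The only point requiring a word of care is that $\Phi(C)$ lives in the Laurent ring $\RR(\kappa)[x_1^{\pm},\dots,x_n^{\pm}]$, so evaluating on $(\CC^*)^n$ rather than all of $\CC^n$ is essential to keep the negative powers well-defined; this is why the statement is restricted to $x \in (\CC^*)^n \cap X_{A,x^*}$, and it is also why the cleared-denominator form $\overline{\Phi(C)}$ of Definition \ref{def:PSSBin} is the object one uses to obtain an honest polynomial invariant on the closure.
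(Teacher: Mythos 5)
Your proposal is correct and follows essentially the same route as the paper's proof: substitute the monomial parameterization into both terms of $\Phi(C)$ and use the circuit relation $\sum_{l=1}^{j-1}\lambda_{i_l} a_{i_l} = \lambda_{i_j} a_{i_j}$ to see that the two terms agree in both the $t$-monomial and the scalar prefactor. Your additional remarks on density of the parameterized torus image and the need to stay in $(\CC^*)^n$ are careful touches the paper leaves implicit, but they do not change the argument.
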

\begin{proof}
Given a linearly dependent set of vectors it holds that, without loss of generality,  $\sum_{l=1}^{j-1}\lambda_{i_l} a_{i_l} = {\lambda_{i_j}}a_{i_j}$ for some integers $\lambda_{i_l}\in \ZZ$ and ${\lambda_{i_j}}\in \ZZ_{>0}$. It follows that $$
t^{\sum_i^{j-1}\lambda_{i_l}a_{i_l}} = t^{{\lambda_{i_j}}a_{i_j}}.
$$ Taking the preimage under $\phi_A$ this can be rewritten as $$
(x_{i_j}^*)^{\lambda_{i_j}} \prod_{l=1}^{j-1}(x_{i_l})^{\lambda_{i_l}}  \prod_{l=1}^{j-1}(x_{i_l}^*)^{-\lambda_{i_l}}=x_{i_j}^{\lambda_{i_j}}.
$$
\end{proof}
The following theorem states that a set of suitably chosen binomials contains only all positive steady states as our original ideal.
\begin{theorem}
Let $\mathcal{M}(A)$ be the matroid associated to a toric chemical reaction network $\mathfrak{N}$ and choose a basis $S$ and $n-d$ circuits $C_i$ such that $\bigcap_i C_i = S$. Then, the following holds $$V(\overline{\Phi(C_1)},\dots,\overline{\Phi(C_{n-d})})\cap \RR^n_{> 0} = V(I_\mathfrak{N})\cap \RR^n_{> 0}.$$
Hence, proving that the variety $V(\langle\overline{\Phi(C_1)},\dots,\overline{\Phi(C_{n-d})}\rangle)$, when intersected with the subspace spanned by the conservation relations, contains at least two positive points is necessary for proving multistationarity of the original toric network given by $I_\mathfrak{N}$. 
\label{thm:Agreement}
\end{theorem}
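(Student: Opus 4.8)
The plan is to establish the set equality by proving the two inclusions separately, treating the forward one as an immediate consequence of Lemma~\ref{lem:parameterization} and reserving the real work for the reverse one. Throughout I identify $V(I_\mathfrak{N})\cap\RR_{>0}^n$ with $X_{A,x^*}\cap\RR_{>0}^n$: since $\mathfrak{N}$ is toric, any components of $V(I_\mathfrak{N})$ other than the toric one lie in the coordinate hyperplanes and hence do not meet the open positive orthant. Before anything else I would unpack the hypothesis $\bigcap_i C_i=S$. Because $S\subseteq C_i$ for every $i$ and $S$ is a basis of $\mathcal{M}(A)$ (so $|S|=d=\rho(\mathcal{M}(A))$), each circuit $C_i$ must have the form $S\cup\{x_{j_i}\}$ for a single non-basis element $x_{j_i}$: a circuit properly containing an independent set of size $d$ has exactly $d+1$ elements. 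Moreover $C_i\mapsto x_{j_i}$ is injective, so the $n-d$ chosen circuits single out the $n-d$ distinct non-basis variables, one per circuit. Consequently, by Definition~\ref{def:LaurentBin}, each Laurent binomial takes the shape $\Phi(C_i)=x_{j_i}^{\lambda_{j_i}}-c_i\prod_{l\in S}x_l^{\lambda_l}$ with $c_i>0$ a constant built from $x^*$ and $\lambda_{j_i}>0$, where the trailing (Laurent) monomial involves \emph{only} basis variables.

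For the forward inclusion, I would take any $x\in X_{A,x^*}\cap\RR_{>0}^n$ and apply Lemma~\ref{lem:parameterization} to get $\Phi(C_i)(x)=0$ for each $i$; clearing the denominator, which is a monomial that does not vanish on $\RR_{>0}^n$, yields $\overline{\Phi(C_i)}(x)=0$. Hence $V(I_\mathfrak{N})\cap\RR_{>0}^n\subseteq V(\overline{\Phi(C_1)},\dots,\overline{\Phi(C_{n-d})})\cap\RR_{>0}^n$.

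The reverse inclusion is the crux. Given $p\in\RR_{>0}^n$ with $\overline{\Phi(C_i)}(p)=0$ for all $i$ (equivalently $\Phi(C_i)(p)=0$, since the two agree on the positive orthant), I would first reconstruct a torus parameter. As $S$ is a basis, the $d\times d$ submatrix $A_S$ formed by the columns indexed by $S$ is invertible, so passing to logarithms the system $x_i^*\,t^{a_i}=p_i$, $i\in S$, has a unique solution $t\in\RR_{>0}^d$. Set $q=\psi_A(t)\in X_{A,x^*}\cap\RR_{>0}^n$; by construction $q_i=p_i$ for all $i\in S$. It remains to match the non-basis coordinates. Fix a non-basis index $j_i$. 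Both $p$ and $q$ satisfy $\Phi(C_i)=x_{j_i}^{\lambda_{j_i}}-c_i\prod_{l\in S}x_l^{\lambda_l}=0$ --- for $q$ this is again Lemma~\ref{lem:parameterization} --- and the monomial $\prod_{l\in S}x_l^{\lambda_l}$ depends only on basis coordinates, where $p$ and $q$ coincide; therefore $p_{j_i}^{\lambda_{j_i}}=q_{j_i}^{\lambda_{j_i}}$, and positivity together with $\lambda_{j_i}>0$ forces $p_{j_i}=q_{j_i}$. Ranging over all $n-d$ circuits covers every non-basis coordinate, so $p=q\in V(I_\mathfrak{N})\cap\RR_{>0}^n$, which gives the reverse inclusion and hence the asserted equality. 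The multistationarity statement then follows at once: intersecting both sides with a compatibility class $\{Zx=c\}$, the positive steady states of $\mathfrak{N}$ in that class are exactly the positive points of $V(\langle\overline{\Phi(C_1)},\dots,\overline{\Phi(C_{n-d})}\rangle)\cap\{Zx=c\}$, so the presence of at least two such points is necessary for the existence of two distinct positive steady states.

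I expect the reverse inclusion to be the main obstacle, and within it the step of showing that agreement on the basis coordinates propagates to all coordinates. This is exactly where the hypothesis $\bigcap_i C_i=S$ is indispensable: it guarantees that every non-basis variable is expressed, through precisely one of the chosen circuit binomials, in terms of basis variables alone, so that fixing $p$ on $S$ determines it everywhere. Secondary points requiring care are the justification of the logarithmic linear solve (invertibility of $A_S$ and strict positivity, so that unique real roots exist) and the passage between $\Phi(C_i)$ and $\overline{\Phi(C_i)}$, which is only an equivalence on the open orthant where the cleared monomials are nonzero.
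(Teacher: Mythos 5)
Your proposal is correct, and the forward inclusion is handled exactly as in the paper (via Lemma \ref{lem:parameterization} plus the observation that clearing a monomial denominator is harmless on $\RR_{>0}^n$). For the reverse inclusion, however, you take a genuinely more constructive route. The paper's proof is a duality argument: it packages the exponent vectors of the circuit binomials into a matrix $\Lambda$ whose rows form a basis of $\ker(A)$, observes that a positive zero $w$ of all the $\overline{\Phi(C_k)}$ satisfies $\tilde{\lambda}_k\cdot\log(w/x^*)=0$ for every $k$, and concludes that $\log(w/x^*)$ lies in the orthogonal complement of $\ker(A)$, i.e.\ in the row space of $A$, so that $w$ is in the image of $x^*t^A$. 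You instead first prove the structural fact that the hypothesis $\bigcap_i C_i=S$ forces each circuit to be $S\cup\{x_{j_i}\}$ with the $x_{j_i}$ distinct and exhausting the non-basis variables, then explicitly solve for the torus parameter $t$ on the basis coordinates (using invertibility of $A_S$ after taking logarithms) and propagate equality to the remaining coordinates one circuit at a time via positivity and $\lambda_{j_i}>0$. The two arguments rest on the same linear algebra --- your distinctness claim is precisely what makes the paper's rows of $\Lambda$ linearly independent and hence a basis of the $(n-d)$-dimensional kernel, a point the paper asserts without proof --- but your version makes that dependence explicit and produces the preimage point concretely, while the paper's version is shorter and generalizes more readily to collections of kernel vectors that do not all contain a common basis. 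Your handling of the passage from $V(I_\mathfrak{N})$ to $X_{A,x^*}$ on the positive orthant and of the final multistationarity remark is also sound.
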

\begin{proof}
Let $F_k(x_1,\dots,x_n) = \overline{\Phi(C_k)}$ to give $W=V(F_1,\dots,F_{n-d})$ and also let $$X_{A,x^*}=\overline{\{ (x_1^*t^{a_1},\dots, x_n^*t^{a_n}) \; | \; t \in (\CC^*)^d\}}.$$
The containment $X_{A,x^*}\cap \RR_{>0}^n \subseteq W\cap \RR_{>0}^n $ follows from Lemma \ref{lem:parameterization}. Now prove the other containment. For a positive real point $w\in W \cap \RR_{>0}^n$ we must have for each $k = 1,\dots, n-d$ that $F_k(w)=0$. Hence, by Definition \ref{def:PSSBin}, $$
w_{i_j}^{\lambda_{i_j}}w^{\lambda_- -\lambda_+} = \left(x^*_{i_j}\right)^{\lambda_{i_j}}\left(\prod_{l=1}^{j-1}\left(x^*_{i_l}\right)^{-\lambda_{i_l}}\right).
$$
For each circuit define the vector $\tilde{\lambda}\in\ZZ^n$ as the vector with non-zero entries $i_\ell$ corresponding to the values of the coefficients of $\lambda_{i_j}a_{i_j} -\sum_{y=1}^{j-1}\lambda_{i_y}a_{i_y}=0$.
and let $\Lambda$ be the matrix with rows $\tilde{\lambda}_k$; note that the rows of $\Lambda$ form a basis of $\ker(A)$. Set $\tilde{w}=(w_{i_1}, \dots, w_{i_j})$ and set $\tilde{x^*}=(x^*_{i_1}, \dots, x^*_{i_j})$; then $$
\left(\frac{\tilde{w}}{\tilde{x^*}}\right)^{\tilde{\lambda}_k}=1, \;\;\; {\rm which\; gives, \;\;}\tilde{\lambda}_k\cdot \log(\tilde{w}/\tilde{x^*})=0.
$$ Further, $\log(\tilde{w}/\tilde{x^*})\in \ker(\Lambda)$ and, hence, $\tilde{w}/\tilde{x^*}$ is in the image of $x^* t^A$.
\end{proof}


As a result of Theorem \ref{thm:Agreement} we can restrict the study of positive steady states to the study of the PSS matroid.
The next proposition establishes a connection between the PSS matroids of all members of a family.

\begin{proposition}
Fix a family of toric reaction networks $\mathcal{N}$ and let $ \mathcal{N}_M, \, \mathcal{N}_N\in\mathcal{N}$ with $M<N$. If both members of the family have the same number of conservation relations then $\mathcal{M}(A_M)$ is a submatroid of $\mathcal{M}(A_N)$.
\label{propn:ASubMat}
\end{proposition}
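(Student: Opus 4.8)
The plan is to reduce the statement about the PSS matroids $\mathcal{M}(A_M)$ and $\mathcal{M}(A_N)$ to the already-established relationship between the $A$-matrices themselves. Recall that by Proposition \ref{propn:PSSmatroid} and Definition \ref{def:PSSMatroid}, the PSS matroid $\mathcal{M}(A)$ is precisely the linear matroid on the columns of the $A$-matrix: a subset $S$ is independent exactly when the corresponding columns of $A$ are linearly independent. Thus the entire proposition is a statement about how the column matroid of $A_M$ sits inside the column matroid of $A_N$, and the work is to connect this to Theorem \ref{thm:AFamSub} and Corollary \ref{cor:ADel}.

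First I would invoke Theorem \ref{thm:AFamSub}, which applies precisely because $\mathcal{N}$ is a toric family (so either condition (i) or (ii) of that theorem holds) and because we have assumed $\mathcal{N}_M$ and $\mathcal{N}_N$ have the same number of conservation relations together with the standing assumption of Remark \ref{rem:assump}. This gives that $A_M$ is a submatrix of $A_N$. More precisely, by iterating Corollary \ref{cor:ADel} from step $N$ down to step $M$, the matrix $A_M$ is obtained from $A_N$ by deleting the columns corresponding to the species introduced after stage $M$; crucially, since the number of conservation relations (hence the number of rows, i.e.\ the dimension $d$ of the toric variety) is constant across the family, $A_M$ and $A_N$ have the same number of rows and $A_M$ is genuinely a \emph{column submatrix} of $A_N$.

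Next I would translate this column-submatrix relation into the submatroid relation. Let $E_M = \{x_1,\dots,x_{n_M}\}$ be the ground set of $\mathcal{M}(A_M)$, viewed as a subset $E_M \subseteq E_N = \{x_1,\dots,x_{n_N}\}$ of the ground set of $\mathcal{M}(A_N)$ via the column inclusion. For any subset $S \subseteq E_M$, the columns of $A_M$ indexed by $S$ are identical to the columns of $A_N$ indexed by $S$ (this is exactly what being a column submatrix means, and requires the equal row count). Hence the columns indexed by $S$ are linearly independent as columns of $A_M$ if and only if they are linearly independent as columns of $A_N$; that is, $S \in \mathcal{I}_M \iff S \in \mathcal{I}_N$. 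Therefore $\mathcal{I}_M = \mathcal{I}_N \cap \mathcal{P}(E_M)$, which is precisely the defining condition for $\mathcal{M}(A_M)$ to be a submatroid of $\mathcal{M}(A_N)$ in the sense of the submatroid definition given earlier.

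The main obstacle, and the only place where care is genuinely needed, is the hypothesis that the number of conservation relations is constant: without it, Corollary \ref{cor:ADel} fails to guarantee that $A_M$ is a pure column submatrix (the matrices could differ in their number of rows, i.e.\ the varieties could differ in dimension), and then linear independence of a fixed set of columns need not be preserved when passing between $A_M$ and $A_N$. Once the equal-row-count column-submatrix structure is secured via Theorem \ref{thm:AFamSub} and Corollary \ref{cor:ADel}, the matroid statement is immediate, since linear independence of a collection of vectors depends only on the vectors themselves and not on the ambient matrix they are drawn from.
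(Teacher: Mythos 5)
Your proposal is correct and follows essentially the same route as the paper's own proof: obtain the column-submatrix relation $A_M \subseteq A_N$ from the earlier results on $A$-matrices of family members (the paper's proof cites Theorem \ref{thm:MatroidHom} here, which appears to be a mis-reference to Theorem \ref{thm:AFamSub}/Corollary \ref{cor:ADel}, exactly the results you invoke), then note that linear independence of a fixed set of columns is unchanged, giving $\mathcal{I}_M = \mathcal{I}_N \cap \mathcal{P}(E_M)$. Your version is in fact slightly more careful than the paper's, both in spelling out why the equal number of conservation relations forces equal row counts and in stating the independence equivalence in both directions.
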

\begin{proof}
If the dimensions of the varieties are the same, then by Theorem \ref{thm:MatroidHom} the $A$-matrix of $\mathcal{N}_M$ is a submatrix of the $A$-matrix of $\mathcal{N}_N$. Therefore, $E_M \subset E_N$ and, trivially, any set of linearly independent columns of any matrix is also linearly independent set of columns in any submatrix containing these. Hence, $\mathcal{I}_M = \mathcal{I}_N\cap\mathcal{P}(E_M)$.
\end{proof}

\subsection{Experimental Design and Compatibility}\label{Sec:ExpermentialDesign}

We now study how steady state matroids and submatroids can be employed in experimental design and model rejection. For the remainder of this subsection we assume that the number of conservation relations within a family is fixed. Hence, by Proposition \ref{propn:ASubMat} the matroids of smaller family members are submatroids of the matroids of larger family members.

Previous related work includes the study of ``complex-linear steady state invariants'' \cite{Karp2012} and data coplanarity \cite{Harrington2012}. A study using the language of algebraic matroids explicitly can be found in \cite{Gross2016}. In this section we obtain results similar to \cite{Harrington2012,MacLean2015} using the PSS matroid and the binomials of Definition \ref{def:PSSBin} and show how they can be used in model rejection and experimental design for entire toric families. In particular, we focus on two extreme cases, the one of all parameters being known and the other of perfect, noise-free measurements. Statistical version of our results could form part of a future work.

First, we determine which species need to be measured to be able to construct the steady state locus for an entire family (if the rate constants are known). 
\begin{proposition}
Fix a family of chemical reaction networks $\mathcal{N}$ for which the rate constants $\kappa$ associated to every family member are known. There exists a subset of species of the smallest family member which is sufficient to measure at steady state in order to construct a corresponding positive steady state for every subsequent family member.
\label{propn:SmallestSets}
\end{proposition}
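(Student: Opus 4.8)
The plan is to leverage the submatroid structure established in Proposition \ref{propn:ASubMat}, which tells us that if every member of the family has the same number of conservation relations, then the PSS matroid $\mathcal{M}(A_M)$ of the smallest family member is a submatroid of $\mathcal{M}(A_N)$ for every larger member $\mathcal{N}_N$. The key observation is that because all $A_N$ share the same rank $d$ (equal to the common number of conservation relations, under the assumption of Remark \ref{rem:assump}), a basis $S$ of the smallest matroid $\mathcal{M}(A_M)$ remains an independent set of size $d$ in every $\mathcal{M}(A_N)$, and since $\rho(\mathcal{M}(A_N)) = d$ it is in fact a \emph{basis} of every larger matroid as well. This is exactly the subset of species I would single out.

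First I would fix a basis $S = \{x_{i_1},\dots,x_{i_d}\}$ of the smallest family member's PSS matroid $\mathcal{M}(A_M)$; by Proposition \ref{propn:PSSmatroid} this corresponds to a maximal linearly independent collection of columns of $A_M$. Next I would invoke Proposition \ref{propn:ASubMat} together with the constant-rank hypothesis to argue that $S$ is simultaneously a basis of $\mathcal{M}(A_N)$ for each $\mathcal{N}_N$ in the family: independence of the columns indexed by $S$ is preserved under passing to the larger matrix (a linearly independent set of columns stays independent in any matrix containing them), and $|S| = d = \rho(\mathcal{M}(A_N))$ forces maximality. Then I would use the parameterization map $\phi_{A_N}$ of Definition \ref{def:ParamMap}: since the monomials $\phi_{A_N}(x_{i_1}),\dots,\phi_{A_N}(x_{i_d})$ are algebraically independent and $d$ in number, they generate the function field of the $d$-dimensional toric variety, so the remaining coordinates are algebraic (in fact monomial, via the circuit binomials $\overline{\Phi(C)}$ of Definition \ref{def:PSSBin}) functions of the basis coordinates.

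The substantive step is to turn ``algebraically dependent'' into an explicit reconstruction. For each species $x_\ell \notin S$, the set $S \cup \{x_\ell\}$ is dependent, hence contains a unique circuit $C_\ell$ with $x_\ell$ in its closure; by Lemma \ref{lem:parameterization} the associated binomial $\overline{\Phi(C_\ell)}$ vanishes on the steady-state variety, and because the rate constants (and therefore $x^*$, by Remark \ref{rem:Constants}) are assumed known, this binomial determines $x_\ell$ uniquely as a monomial in the measured basis concentrations. Collecting these relations over all $\ell$ reconstructs the full positive steady state from the measurements of $S$ alone, for every family member simultaneously.

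I expect the main obstacle to be the bookkeeping that the \emph{same} set $S$ of species works across the entire family rather than a member-dependent basis. This hinges crucially on the constant-rank assumption: if the number of conservation relations were allowed to grow, the rank $d$ would increase and a fixed $d$-element set could fail to be a basis for larger members. By Corollary \ref{cor:ADel}, the constant-rank hypothesis guarantees that $A_M$ is literally a column-submatrix of $A_N$, so the species of the smallest member genuinely persist as variables in every larger network, which is what makes a single universal measurement set possible.
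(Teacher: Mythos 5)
Your proposal is correct and follows essentially the same route as the paper's proof: fix a basis $S$ of the smallest member's PSS matroid, use Proposition \ref{propn:ASubMat} (under the standing assumption of a constant number of conservation relations) to conclude $S$ remains a basis for every subsequent member, and recover the non-basis concentrations via the circuit binomials of Definition \ref{def:PSSBin} with $x^*$ known from the rate constants. Your additional justification of why $S$ stays a \emph{basis} (independence persists and the ranks agree) is a detail the paper leaves implicit but does not change the argument.
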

\begin{proof}
Recall that $x^*$ is a positive vector for known constants. Choose a basis $S$ of the PSS matroid of the smallest family member $\mathcal{N}_1$. Hence, by Definition \ref{def:PSSBin} binomial relations can be constructed to determine the steady state concentrations of the chemical species not in the basis. By Proposition \ref{propn:ASubMat} any basis of $\mathcal{N}_1$ is a basis of the subsequent family members and, hence, Definition \ref{def:PSSBin} applies.
\end{proof}
Hence, by Proposition \ref{propn:SmallestSets}, measuring the concentrations of all species in a basis of the smallest network in a family is sufficient to determine the expected values of the chemical concentrations at steady state of the entire family.

\begin{example}
A detailed study of positive steady state parameterizations for the processive system has been carried out in \cite{ProcPhos}. Suppose we know all $11$ rate constants of the one and two site processive phosphorylation networks of the running example. In this case the functions $x^*_i(\kappa)$ can be found and evaluated explicitly (e.g. see \cite{ProcPhos}). Further, the columns $a_1,a_2,a_3$ form a basis of $\RR^3$ and hence the column space of the $A$ matrix. Therefore, given measurements of $x_1,x_2,x_3$, the expected steady state concentrations of any species $x_j$ with $j>3$ can be found as a zero of the Laurent binomial
$$\Phi(C)(x) = x_1^{\lambda_1}x_2^{\lambda_2}x_3^{\lambda_3} - \left(x_1^*\right)^{\lambda_1}\left(x_2^*\right)^{\lambda_2}\left(x_3^*\right)^{\lambda_3}\left(x_j^*\right)^{-\lambda_j} x_j^{\lambda_j}.$$
\end{example}

Next, we use the PSS matroids for model selection or model rejection for perfect, that is noise-free, data. However, the result of Lemma \ref{lem:Rejection} could also be applied to noisy data by following the construction in \cite{MacLean2015}. To determine whether a model is compatible with observed data it is necessary to determine whether there exists a set of parameters $\{\kappa_1,\dots,\kappa_m\} > 0$ such that a measured data point $\{\xi_{i_1},\dots,\xi_{i_j}\}$ is an element of the steady state variety.
We proceed by formulating a parameter-free condition for model compatibility.

\begin{lemma}
Let $\mathfrak{N}$ be a reaction network with PSS matroid $\mathcal{M}(A)$. Assume that the rate constants are fixed but unknown, but the initial conditions are varied. Fix a circuit $C$ corresponding to the linear relations among the columns of $A$ via $\sum_{l=1}^{j-1}\lambda_{i_l}a_{i_l} = \lambda_{i_j}a_{i_j}$.
Given a pair of steady state measurements $\{\xi_{i_1},\dots,\xi_{i_j}\}$ and $\{\zeta_{i_1},\dots,\zeta_{i_j}\}$ of the concentrations of $C$, the corresponding model is compatible only if $$\xi_{i_j}^{\lambda_{i_j}}\prod_{l=1}^{j-1}\xi_{i_l}^{-\lambda_{i_l}} = \zeta_{i_j}^{\lambda_{i_j}}\prod_{l=1}^{j-1}\zeta_{i_l}^{-\lambda_{i_l}}.$$
\label{lem:Rejection}
\end{lemma}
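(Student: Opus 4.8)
The plan is to derive the stated identity directly from the vanishing of the Laurent binomial $\Phi(C)$ on the positive steady state variety, which was established in Lemma \ref{lem:parameterization}, together with the observation that the coefficient appearing in $\Phi(C)$ depends only on the rate constants and not on the initial conditions. The entire content of the lemma lies in this last observation.

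First I would pin down the meaning of compatibility in this setting. Saying that the model $\mathfrak{N}$ is compatible with the two measurements means there is a single choice of positive rate constants $\kappa=(\kappa_1,\dots,\kappa_m)$ for which both data points $\{\xi_{i_1},\dots,\xi_{i_j}\}$ and $\{\zeta_{i_1},\dots,\zeta_{i_j}\}$ arise from positive steady states. By Remark \ref{rem:Constants} the scaling vector $x^*=x^*(\kappa)$ is an algebraic function of $\kappa$ alone, so this single choice of $\kappa$ determines one vector $x^*$, and both steady states lie on the common toric variety $X_{A,x^*}\cap\RR_{>0}^n$. Crucially, varying the initial conditions, that is, moving to a different compatibility class, changes neither $\kappa$ nor $x^*$, and hence leaves this variety unchanged.

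Next I would apply Lemma \ref{lem:parameterization} to each of the two steady states, noting that the binomial $\Phi(C)$ is supported only on the coordinates indexed by $C$, so the partial measurements $\{\xi_{i_1},\dots,\xi_{i_j}\}$ and $\{\zeta_{i_1},\dots,\zeta_{i_j}\}$ already suffice to evaluate it. Writing out $\Phi(C)$ from Definition \ref{def:LaurentBin} and using $\Phi(C)(\xi)=0$ gives
\[
\xi_{i_j}^{\lambda_{i_j}}\prod_{l=1}^{j-1}\xi_{i_l}^{-\lambda_{i_l}}
=\left(x^*_{i_j}\right)^{\lambda_{i_j}}\prod_{l=1}^{j-1}\left(x^*_{i_l}\right)^{-\lambda_{i_l}},
\]
and the identical computation with $\Phi(C)(\zeta)=0$ produces the same right-hand side. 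Equating the two left-hand sides eliminates $x^*$ entirely and yields the claimed parameter-free relation.

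The one point requiring care is exactly the claim that both measurements force the same value of the right-hand side. This is not automatic, since a priori each measurement comes from its own experiment with its own total concentrations. The resolution is that $\Phi(C)$ separates the dependence on $x^*$, carried by its coefficient, from the dependence on the initial data, carried by the monomials in $x$, and the coefficient is a function of $\kappa$ alone. Since the two experiments share the same (unknown but fixed) $\kappa$, they share the same $x^*$, so the $x^*$-dependent coefficient cancels. Finally, because the conclusion is a necessary condition, its contrapositive, namely that failure of the displayed identity rules out compatibility, is the usable model-rejection criterion, and no further argument is needed.
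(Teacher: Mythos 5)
Your proposal is correct and follows essentially the same route as the paper's proof: rearrange $\Phi(C)$ so that the circuit monomial in the measured concentrations equals a constant $\theta=\left(x^*_{i_j}\right)^{\lambda_{i_j}}\prod_{l=1}^{j-1}\left(x^*_{i_l}\right)^{-\lambda_{i_l}}$ depending only on the fixed $\kappa$ (via Remark \ref{rem:Constants}), and observe that both measurements must produce the same $\theta$. Your version is somewhat more explicit about invoking Lemma \ref{lem:parameterization} and about why varying initial conditions leaves $x^*$ unchanged, but the substance is identical.
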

\begin{proof}
As in Remark \ref{rem:Constants} we fix rate constants $\kappa = (\kappa_1,\dots,\kappa_m)^T\in \RR^m_{>0}$ such that $x^*\in \RR^n_{>0}$. Rearrange $\Phi(C)$ of Definition \ref{def:LaurentBin} to give
$$
x_{i_j}^{\lambda_{i_j}}\prod_{l=1}^{j-1}x_{i_l}^{-\lambda_{i_l}} = \left(x^*_{i_j}\right)^{\lambda_{i_j}}\left(\prod_{l=1}^{j-1} \left(x^*_{i_l}\right)^{-\lambda_{i_l}}\right)=\theta \in \RR_{>0}.
$$
For the measurements to be compatible we must have that when we evaluate the expression above at $x_{i_l}=\xi_{i_l}$ and at $x_{i_l}=\zeta_{i_l}$ we obtain the same value, $\theta$. The conclusion follows. 
\end{proof}


\begin{example}
Suppose we wanted to test whether some data we obtained could originate from a processive phosphorylation system. We set up the experiment using two different initial conditions and measured the concentrations of the elements of the circuit $\{x_1,x_2,x_3,x_4\}$ and based on the results of the previous section we know that $(x_1x_3)/(x_2x_4) =\, const.$ Let the two ideal data points be $\xi = \{5/2,3/10,10/8,875/96\}$ and $\zeta = \{1/2,3/5,1/8,35/384\}$. A priori these data seem unrelated, however, by using the circuit information we find that
$$
\frac{\xi_1\xi_3}{\xi_2\xi_4} = \frac{\zeta_1\zeta_3}{\zeta_2\zeta_4}.
$$
Therefore, the data are compatible with the processive model. Note that the relation holds equally for the one-site and two-site model and we can conclude that the entire family is compatible.
\end{example}

Due to the additional structure provided by the PSS matroid the condition of Lemma \ref{lem:Rejection} is much simpler than the linear algebra condition of \cite{Harrington2012}. By Proposition \ref{propn:ASubMat} it is easy to see that if Lemma \ref{lem:Rejection} holds for a given PSS matroid it holds for all of its submatroids. Hence, measuring only a subset of species which belongs to a small member of the family tells us that the measurements of this subset of species is compatible for all subsequent family members. This allows us to determine that a given data set is compatible with a family of networks, but we cannot specify which network in the family is `most' compatible with the data. 

Identifying model parameters for perfect data has been studied extensively in previous work \cite{Meshkat2018,Chis2011,Walter2014} and, therefore, the discussion in this chapter is restricted to the novel methods obtained by using the PSS matroid of a family of networks. In particular, the PSS matroid allows for straightforward conversion of a variety in the space of species to a variety in the space of parameters. We first show that, in the case of toric steady states the biochemically viable parameter sets, $\kappa = (\kappa_1,\dots,\kappa_m)^T\in\RR^m_{>0}$, are the positive part of an algebraic variety and then generalise this result to the entire family.

\begin{proposition}
Let $A$ be a full rank $d\times n$ integer matrix and let $C_1,\dots,C_{n-d}$ be a collection of circuits of the matroid $\mathcal{M}(A)$, each containing the same basis $S$. Using Definition \ref{def:LaurentBin} to obtain the ideal $J = \langle \overline{\Phi(C_1)},\dots,\overline{\Phi(C_{n-d})}\rangle \subseteq R=\RR(\kappa_1,\dots,\kappa_m)[x_1,\dots,x_n]$ and denoting the variables present in a circuit as $x(C_1)\subseteq \{x_1,\dots,x_n\}$, then the intersection ideal $J_{C_i} = J\cap\RR(\kappa_1,\dots,\kappa_m)[x(C_i)]$ is principal (generated by one element) with generator $\overline{\Phi(C_i)}$.
\label{lem:CircuitIdeal}
\end{proposition}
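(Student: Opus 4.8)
The plan is to establish the two inclusions of $J_{C_i}=\langle\overline{\Phi(C_i)}\rangle$ inside the ring $R_i:=\RR(\kappa_1,\dots,\kappa_m)[x(C_i)]$. The inclusion $\langle\overline{\Phi(C_i)}\rangle\subseteq J_{C_i}$ is immediate, since $\overline{\Phi(C_i)}$ is one of the generators of $J$ and involves only the variables $x(C_i)$. All the real work lies in the reverse inclusion, and the structural fact to exploit is this: because $S$ is a basis (so $|S|=d$) and each $C_k$ is a circuit containing $S$, each circuit has the form $C_k=S\cup\{x_{j_k}\}$ with a single extra variable $x_{j_k}$, and these extra variables are pairwise distinct (indeed $\bigcap_k C_k=S$). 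Hence the non-basis variable $x_{j_k}$ occurs in $\overline{\Phi(C_k)}$ and in no other generator of $J$. By Definition \ref{def:PSSBin} each generator is a binomial $\overline{\Phi(C_k)}=x_{j_k}^{\lambda_{j_k}}m_k^- - c_k m_k^+$, where $m_k^\pm$ are monomials in the basis variables $x_S$ whose supports are disjoint and $c_k\in\RR(\kappa_1,\dots,\kappa_m)$.

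The engine of the argument is a localization at the basis variables. I would set $B:=\RR(\kappa_1,\dots,\kappa_m)[x_S^{\pm 1}]$ and $R_{\mathrm{loc}}:=B[x_{j_1},\dots,x_{j_{n-d}}]$. Since $m_k^-$ is a unit in $B$, over $R_{\mathrm{loc}}$ the generator $\overline{\Phi(C_k)}$ generates the same ideal as the \emph{monic} binomial $u_k:=x_{j_k}^{\lambda_{j_k}}-d_k$ with $d_k=c_k m_k^+(m_k^-)^{-1}\in B^{\times}$. These $n-d$ binomials now have pairwise coprime leading monomials, namely pure powers of the distinct variables $x_{j_k}$, so Buchberger's first criterion shows that $\{u_1,\dots,u_{n-d}\}$ is a Gr\"obner basis of $JR_{\mathrm{loc}}$ with respect to any elimination order ranking $\{x_{j_k}:k\neq i\}$ above $x_{j_i}$. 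The elimination theorem then gives at once $JR_{\mathrm{loc}}\cap B[x_{j_i}]=\langle u_i\rangle$.

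It remains to descend this localized identity back to $R_i$. Given $f\in J_{C_i}=J\cap R_i$, we have $f\in JR_{\mathrm{loc}}\cap B[x_{j_i}]=\langle u_i\rangle$, and since $u_i=(m_i^-)^{-1}\overline{\Phi(C_i)}$ this yields $f=q\,\overline{\Phi(C_i)}$ for some $q\in B[x_{j_i}]=\RR(\kappa_1,\dots,\kappa_m)[x_S^{\pm 1}][x_{j_i}]$. To conclude $q\in R_i$, I would invoke Gauss's lemma: because $m_i^+$ and $m_i^-$ have disjoint support (the defining property of $\lambda^+,\lambda^-$ in Definition \ref{def:PSSBin}), the polynomial $\overline{\Phi(C_i)}$ is primitive as a polynomial in $x_{j_i}$ over the UFD $\RR(\kappa_1,\dots,\kappa_m)[x_S]$. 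Dividing $f\in\RR(\kappa_1,\dots,\kappa_m)[x_S][x_{j_i}]$ by this primitive polynomial in the fraction field therefore keeps the quotient inside $\RR(\kappa_1,\dots,\kappa_m)[x_S][x_{j_i}]=R_i$, so $q\in R_i$ and $f\in\langle\overline{\Phi(C_i)}\rangle_{R_i}$. This proves the reverse inclusion and hence that $J_{C_i}$ is principal with generator $\overline{\Phi(C_i)}$.

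The step I expect to be the main obstacle is precisely this descent: certifying that no negative powers of the basis variables survive when passing from $R_{\mathrm{loc}}$ back to $R_i$. The localization and the coprime-leading-term Gr\"obner computation are conceptually routine once the exclusive-variable structure $C_k=S\cup\{x_{j_k}\}$ is recorded, but controlling denominators is where the disjoint-support condition on $m_i^\pm$ is genuinely used, through primitivity and Gauss's lemma. An alternative to localization would be to argue the elimination directly on $J$ via S-polynomial reductions, but there the leading monomials share basis-variable factors and are not coprime, so the localization is what makes the Gr\"obner step clean.
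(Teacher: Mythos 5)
Your proof is correct, but it takes a genuinely different route from the paper's. The paper argues abstractly: since $C_i$ is a circuit, $J_{C_i}$ has codimension one in $\RR(\kappa_1,\dots,\kappa_m)[x(C_i)]$, hence is principal by the result from \cite{MumfordRed} that codimension-one ideals in a polynomial ring (a UFD) are generated by a single element, and then it identifies the generator as $\overline{\Phi(C_i)}$. You instead give a constructive elimination argument: you record the structural fact that a circuit containing the basis $S$ must be $S\cup\{x_{j_k}\}$ with the extra variables pairwise distinct, localize at the basis variables so that each generator becomes monic in its own exclusive variable, apply the coprime-leading-term (Buchberger) criterion and the elimination theorem, and descend via Gauss's lemma using the disjoint-support primitivity of $\overline{\Phi(C_i)}$. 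What your approach buys is precisely the step the paper leaves essentially unargued --- the identification of the principal generator as $\overline{\Phi(C_i)}$ rather than some proper divisor or other element --- and it avoids having to verify irreducibility or primality of $J_{C_i}$; the cost is length and one technical care point, namely that you run Gr\"obner/elimination theory over the Laurent ring $B$ rather than a field (harmless here since all leading coefficients are units, and avoidable entirely by working over the fraction field $\RR(\kappa_1,\dots,\kappa_m)(x_S)$ and letting your Gauss's lemma step absorb the descent). The paper's argument is shorter and more conceptual; yours is self-contained and actually exhibits the division certifying principality.
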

\begin{proof}
By construction both, the numerator and the denominator of $\Phi(C_i)$ contain only variables in $C_i$ and, hence, after clearing the denominator the resulting polynomial $\overline{\Phi(C_i)}$ also only contains variables in $C_i$.
Since $C_i$ is a circuit, the ideal $J_{C_i}$ has codimension 1 in $R\cap\RR(\kappa_1,\dots,\kappa_m)[x(C_i)]$ and, hence, by \cite[I.,\S 7, Proposition 4]{MumfordRed} it is principal. Further, $J\cap\RR(\kappa_1,\dots,\kappa_m)[x(C_i)] = \langle\overline{\Phi(C_i)}\rangle$ and, therefore, $J_{C_i} = \langle\overline{\Phi(C_i)}\rangle$.
\end{proof}

Proposition \ref{lem:CircuitIdeal} shows that the positive part of the steady state variety can easily be projected onto coordinate subspaces by dropping circuits. Hence, the PSS matroid allows for some freedom to ``pick and mix'' variables according to measurements. The picking and mixing corresponds to the geometric operation of projection of the variety $X = V(\langle \overline{\Phi(C_1)},\dots,\overline{\Phi(C_{n-d})}\rangle) \subseteq \RR(\kappa_1,\dots,\kappa_m)^n$. Next, suppose there exists a measurement $\xi = \{\xi_{i_1},\dots,\xi_{i_\ell+d}\}$ containing measurements for the species in a basis $S$ of the PSS and the species in the circuits $C_1,\dots,C_\ell$, all containing $S$. Combining the idea of projection and measurement (evaluation) leads to the definition of a {\em parameter variety}.

\begin{definition}
Keeping the same notation as above and, by  choosing an appropriate set of generators, i.e.~``clearing the denominators'', let $J = \langle \overline{\Phi(C_1)},\dots,\overline{\Phi(C_{\ell})}\rangle\subseteq \RR[\kappa_1,\dots,\kappa_m,x_1,\dots,x_n]$. Hence, $V(J) \subset \RR^m\times\RR^n$. The parameter variety, $X_m$, is obtained from $V(J)$ by the evaluation $$X_m = V(J)\cap V(\langle x_{i_1}-\xi_{i_1},\dots,x_{i_{\ell + d}}-\xi_{i_{\ell + d}}\rangle)\subseteq\RR^m.$$
\end{definition}

The parameter variety is obtained by the selective projection and evaluation of the binomials obtained from Definition \ref{def:PSSBin} and is a variety in the space of parameters only. Every parameter vector compatible with the measurement $\xi$ is on the parameter variety. Every set of measurements gives rise to a parameter variety and, in order to be compatible with a model, their intersection needs to be non empty, in fact, their intersection needs to be non empty in the positive orthant. In order to uniquely identify a model based on a given measurement the positive orthant of $X_m$ needs to consist of a single point only. Various algebraic techniques can be applied to show when this is the case, e.g.~\cite{CLO,Muller2016}.

The parameter varieties of families of toric networks can be related by applications of projections (selective application of Definition \ref{def:PSSBin}) and the partial evaluation map $\pi$. Suppose $J_N$ and $J_\text{int}$ are the ideals of the $N^\text{th}$ member of a family and the intermediate model between the $N^\text{th}$ and $(N+1)^\text{th}$ member, respectively. Let both ideals (or a projection of them) contain the same circuits $C_1,\dots,C_\ell$. Then, by Proposition \ref{propn:BinId}, $J_{N} = \pi(J_\text{int})$.

\section{Inheritance of Multistationarity for Toric Families}\label{sec:Multi}

In this section we investigate the inheritance of multistationarity among members of families of toric chemical reaction networks. 
Our main result is Theorem \ref{thm:MultiStat}, where we apply results of \cite{Sadeghimanesh2018,Muller2016} to show  that if we can find a multistationary member of a family (satisfying certain conditions), then every larger member of the same family is multistationary for some parameter values.
We begin by introducing some notation.

\begin{definition}
Let $I_B\subset \RR[x_1,\dots,x_n]$ be a prime binomial ideal defining a complete intersection of codimension $n-d$ and let the matrix $B$ be an exponent matrix of a minimal generating set of $I_B$. Define
$$J = \begin{pmatrix}Z\\ B^T\end{pmatrix}$$ where $Z$ is an integer matrix. Further, let $$J^\lambda = \begin{pmatrix}Z\\ \left(B^T\right)^\lambda \end{pmatrix},$$ where $\left(B^T\right)^\lambda = (b_1\lambda_1,\dots,b_n\lambda_n)$ for the columns $b_i$ of  $B^T$. We call $J^\lambda$ regular if $\det(J^\lambda) \neq 0$ for some values of $(\lambda_1,\dots,\lambda_n)$.
\label{def:J}
\end{definition}

We now state the main theorem of this section. 

\begin{theorem}
Fix a family of toric chemical reaction networks. Suppose that the family obeys the following conditions.
\begin{enumerate}
  \item[(C1)] The family has toric intermediates.
  \item[(C2)] The number of conservation relations stays constant.
  \item[(C3)] The matrix $J^\lambda$ for the $N^\text{th}$ network is regular.
\end{enumerate}
Then, if the $N^\text{th}$ member of the family is capable of multistationarity, every member of the family for which $M\geq N$ is also capable of multistationarity.
\label{thm:MultiStat}
\end{theorem}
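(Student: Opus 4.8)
The plan is to convert multistationarity into a sign condition on $\det(J^\lambda)$, transport that condition from $\mathcal{N}_N$ to $\mathcal{N}_{N+1}$ using the block structure of the exponent matrices supplied by Theorem \ref{thm:Submatrix} and Corollary \ref{cor:ADel}, and then induct along the family to reach every $M\geq N$.

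First I would invoke the determinant criterion for toric multistationarity. By (C1) both $\mathcal{N}_N$ and its intermediate are toric, so by Proposition \ref{propn:BinId} the positive steady states of $\mathcal{N}_N$ inside a compatibility class are the positive solutions of the square system built from the conservation matrix $Z$ and a minimal binomial exponent matrix $B$. Following \cite{Muller2016,Sadeghimanesh2018}, such a system has two distinct positive solutions in a common class for some rate constants \emph{precisely} when $\det(J^\lambda)$ attains both signs as $\lambda$ ranges over $\RR^n_{>0}$; the regularity hypothesis (C3) is exactly what guarantees $\det(J^\lambda)\not\equiv 0$, so that a genuine sign change, rather than identical vanishing, is the relevant phenomenon. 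Thus the assumption that $\mathcal{N}_N$ is multistationary translates into the existence of $\mu^{(1)},\mu^{(2)}\in\RR^n_{>0}$ with $\det(J_N^{\mu^{(1)}})\,\det(J_N^{\mu^{(2)}})<0$.

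Second I would record the block structure. Since the $n'$ new species occur only in the newly added (intermediate) reactions, the old binomial relations do not involve them, so by Theorem \ref{thm:Submatrix}(i) the exponent matrix of $\mathcal{N}_{N+1}$ has the form $B_{N+1}=\begin{pmatrix}B_N & C\\ 0 & D\end{pmatrix}$, with $B_N$ the old block and $D$ an $n'\times n'$ block resolving the new species. By (C2) the number $d$ of conservation relations is unchanged, so $J_{N+1}$ is again square and Corollary \ref{cor:ADel} applies. Writing $Z_{N+1}=(Z'\mid Z'')$ and scaling the old and new species by $\mu$ and $\nu$ respectively, a Schur-complement expansion yields
\[
\det\!\left(J_{N+1}^{(\mu,\nu)}\right)=\det(D)\Big(\textstyle\prod_j\nu_j\Big)\det\begin{pmatrix} Z'-Z''\big(D^{T}\mathrm{diag}(\nu)\big)^{-1}C^{T}\mathrm{diag}(\mu)\\ B_N^{T}\mathrm{diag}(\mu)\end{pmatrix},
\]
valid whenever $D$ is invertible; invertibility holds because the new binomials form a complete intersection determining the new species, and the same factorization shows that the regularity (C3) is inherited by $\mathcal{N}_{N+1}$.

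Finally I would transport the sign change and induct. Sending $\nu\to\infty$ makes the coupling term $Z''(D^{T}\mathrm{diag}(\nu))^{-1}C^{T}\mathrm{diag}(\mu)$ vanish, leaving $\det\begin{pmatrix}Z'\\ B_N^{T}\mathrm{diag}(\mu)\end{pmatrix}$; once this is identified with a fixed nonzero multiple of $\det(J_N^\mu)$, the sign of $\det(J_{N+1}^{(\mu,\nu)})$ agrees with that of $\det(D)(\prod_j\nu_j)\det(J_N^\mu)$ for large $\nu$, so $\mu^{(1)},\mu^{(2)}$ force $\det(J_{N+1})$ to take both signs and $\mathcal{N}_{N+1}$ is multistationary; iterating gives all $M\geq N$. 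The step I expect to be the main obstacle is precisely this identification: one must show that the restriction $Z'$ of the larger network's conservation laws spans the same row space as $Z_N$ (so that the limiting determinant is genuinely proportional to $\det(J_N^\mu)$ and not a spurious unrelated minor). This is where (C2) is essential, and I would handle it by choosing the conservation basis of $\mathcal{N}_{N+1}$ compatibly with that of $\mathcal{N}_N$, using that a change of basis among the $d$ conservation relations only rescales $\det(J^\lambda)$ by a fixed nonzero constant and hence preserves the sign-change property.
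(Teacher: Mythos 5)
Your overall strategy coincides with the paper's: both proofs reduce multistationarity to the sign behaviour of $\det(J^\lambda)$ via \cite{Muller2016,Sadeghimanesh2018}, both exploit the block structure of $J^\lambda_{N+1}$ coming from the fact that the old binomials do not involve the new species, both use (C2) to identify the restriction of $Z_{N+1}$ to the old columns with $Z_N$ (this is exactly Lemma \ref{lem:ConsLaws}, so the ``main obstacle'' you flag is real and is resolved the way you suggest), and both finish by induction. Where you genuinely diverge is in the sign-transfer step: the paper stays at the level of \emph{coefficients}, observing that the only monomials of $\det(J^\lambda_{N+1})$ divisible by $\lambda_{n+1}\cdots\lambda_{n+n'}$ are those of $T=\bigl(\prod_{i}b_{n+i}\lambda_{n+i}\bigr)\det(J^\lambda_N)$, so no cancellation can occur and mixed coefficient signs persist; you instead work with \emph{values}, using a Schur complement and a $\nu\to\infty$ limit. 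Your route is correct but imports two obligations the paper avoids. First, the criterion in \cite{Sadeghimanesh2018} is stated in terms of the polynomial $\det(J^\lambda)$ having coefficients of both signs, not in terms of it attaining both signs on $\RR^n_{>0}$; these are equivalent here because $\det(J^\lambda)$ is multilinear and homogeneous of degree $n-d$ in $\lambda$ (a dominant-monomial argument shows a mixed-sign multilinear form attains both signs on the positive orthant), but you use the equivalence in both directions without proving it, and your ``precisely when'' is literally false in the degenerate case $\det(J^\lambda)\equiv 0$ — harmless only because (C3) and your limit argument exclude it. Second, the invertibility (indeed squareness) of your block $D$ is not automatic from ``complete intersection'': it comes from choosing, as the paper does, each new circuit to contain the common basis $S$ and exactly one new species, which makes $D$ diagonal with positive entries; with an arbitrary generating set of binomials this step could fail, so that choice should be made explicit. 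With those two points filled in, your argument is a valid, slightly more analytic variant of the paper's combinatorial term-counting.
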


Before proving the theorem above we prove the following lemma.

\begin{lemma}
Fix a family of reaction networks, where each member of the family and each intermediate network have the same number of conservation relations. Let $Z_N$ be a conservation relation matrix of the $N^\text{th}$ network. Then, $Z_{N+1}$ is obtained by adding columns to $Z_N$, i.e. $Z_N$ is a submatrix of $Z_{N+1}$.
\label{lem:ConsLaws}\end{lemma}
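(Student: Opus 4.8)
The plan is to use the intermediate network $\mathcal{G}_N^{\rm int}$ as a bridge between the $N^\text{th}$ and $(N+1)^\text{th}$ members, exploiting the block-triangular shape that its stoichiometric matrix inherits from the construction of Definition \ref{def:recursiveG}. First I would order the species so that the $n$ species of $\mathcal{N}_N$ come first and the $n'$ genuinely new species (which exist by Condition 4) come last. Since every reaction of $\mathcal{N}_N$ involves only old species, the old-reaction columns have vanishing new-species entries, so the stoichiometric matrix of the intermediate network takes the form
$$\Gamma_{\rm int} = \begin{pmatrix} \Gamma_N & G_1 \\ 0 & G_2 \end{pmatrix},$$
where $\Gamma_N$ is the full stoichiometric matrix of $\mathcal{N}_N$ (the intermediate deletes none of its reactions), and $G_1, G_2$ are the old- and new-species parts of the reactions joining $M$ to $\mathcal{G}_N$. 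Since a conservation relation is exactly an element of the left kernel $\ker(\Gamma^T)$, I would run the whole argument at the level of these kernels.

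The second step is to identify the conservation space of the intermediate network with that of $\mathcal{N}_{N+1}$. By condition 2 of Definition \ref{def:recursiveG}, $\Gamma_{N+1}$ is obtained from $\Gamma_{\rm int}$ by deleting the columns of the removed reactions $Y$; deleting columns only enlarges the left kernel, so $\ker(\Gamma_{\rm int}^T) \subseteq \ker(\Gamma_{N+1}^T)$. Because the hypothesis forces both spaces to have the same dimension $d$, they coincide. Hence it suffices to produce a basis of $\ker(\Gamma_{\rm int}^T)$ whose restriction to the first $n$ coordinates is a basis of $\ker(\Gamma_N^T)$, as this exhibits $Z_N$ as the first $n$ columns of a conservation matrix $Z_{N+1}$.

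Third, I would consider the coordinate projection $P:\RR^{n+n'}\to\RR^n$ that forgets the new species. If $(z,z')\in\ker(\Gamma_{\rm int}^T)$, pairing with the old-reaction columns $\left(\Gamma_N^T,\,0\right)^T$ gives $z^T\Gamma_N = 0$, so $P$ maps $\ker(\Gamma_{\rm int}^T)$ into $\ker(\Gamma_N^T)$; moreover its kernel consists exactly of the relations $(0,z')$ supported on the new species, that is, $z'\in\ker(G_2^T)$. As both $\ker(\Gamma_{\rm int}^T)$ and $\ker(\Gamma_N^T)$ have dimension $d$, the map $P$ is an isomorphism as soon as it is injective. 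Choosing a basis of the source that maps onto any prescribed basis $Z_N$ of the target then lifts each row of $Z_N$ to a conservation relation of $\mathcal{N}_{N+1}$, and stacking these rows yields a $Z_{N+1}$ whose first $n$ columns are precisely $Z_N$, which is the claim.

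The main obstacle is therefore the injectivity of $P$, equivalently the statement that no conservation relation is supported entirely on the newly added species, i.e. that $G_2=\Gamma_{\rm new,new}$ has full row rank $n'$. I expect this to be the only non-formal point: a crude rank count through the block form only gives $\dim\ker(G_2^T)\ge 0$, so the conclusion genuinely uses the family structure and not merely the equal-conservation-number hypothesis. I would establish $\ker(G_2^T)=0$ from Definition \ref{def:recursiveG} and Remark \ref{remark:InformalFamilyConstuct}: each new species is introduced inside a complex that the new reactions (adjacent to both $M$ and $\mathcal{G}_N$) both produce and consume, so the new species cannot organise into an isolated conserved pool and their reaction vectors span $\RR^{n'}$, exactly as one verifies for the phosphorylation intermediates. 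With $\ker(G_2^T)=0$ in hand, injectivity and hence bijectivity of $P$ follow, completing the proof.
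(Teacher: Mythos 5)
Your overall architecture is the same as the paper's: the block-triangular form of the intermediate stoichiometric matrix, the observation that deleting the columns of $Y$ can only enlarge the left kernel so that equidimensionality identifies $\ker\bigl((\Gamma_N^{\rm int})^T\bigr)$ with $\ker(\Gamma_{N+1}^T)$, and the reduction of the whole statement to the single claim that the new-species block $G_2$ of the added reactions has full row rank $n'$. You are right that this rank claim is the crux --- but you do not prove it. Your proposed justification, that ``each new species is introduced inside a complex that the new reactions both produce and consume, so the new species cannot organise into an isolated conserved pool,'' is not a valid inference: a reversible pair $C\rightleftharpoons D$ between two new species produces and consumes both of them and still conserves $C+D$, i.e.\ $(1,1)^T\in\ker(G_2^T)$. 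Being produced and consumed only rules out relations of the form ``$x_C$ is constant,'' not conserved linear combinations, and nothing in Definition \ref{def:recursiveG} (which constrains the isomorphism type of the added labelled graph $M$, not its stoichiometry) yields the spanning statement you assert. So the one step you yourself flag as ``the only non-formal point'' is left open, and the sketch you give for closing it would fail.

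For comparison, the paper does not appeal to the graph structure here at all: it extracts $\mathrm{rank}(\tilde\Gamma_N)=n'$ from the equal-conservation-number hypothesis, by the same dimension count used for Corollary \ref{cor:ADel} (adding $n'$ rows without changing the left-kernel dimension forces the new reactions to contribute $n'$ to the rank), and then passes to $\Gamma_{N+1}$ exactly as you do. Your own accounting, $d=\dim P\bigl(\ker((\Gamma_N^{\rm int})^T)\bigr)+\dim\ker(G_2^T)$, correctly shows that a crude rank bound is not enough and that one must also know $P$ is \emph{onto} $\ker(\Gamma_N^T)$, i.e.\ that every conservation relation of $\mathcal{N}_N$ extends to the intermediate network; that observation is a genuine sharpening of the issue. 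But the fix is to exploit the equidimensionality hypothesis (which the lemma imposes on the members \emph{and} on the intermediate network) in that linear-algebraic count, not to replace it with an unproved chemical heuristic about $M$. As written, the proposal substitutes a false-in-general structural claim for the decisive step, so the proof is incomplete.
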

\begin{proof}
	First, build the stoichiometric matrix of the intermediate network by adding $n'$ species and $m'$ reactions to the $N^{\text{th}}$ network. Hence, the stoichiometric matrix of the intermediate network, $\Gamma^{\text{int}}_N$, has the form $$\Gamma_N^{\text{int}} = \left(
	\begin{array}{cc}
	\begin{array}{c|} \Gamma_N\\ \hline 0_{n'\times m} \end{array} & \tilde{\Gamma}_{N}
	\end{array}\right),$$
	where $\Gamma_N = (r_1,\dots,r_m)$ is the $n\times m$ stoichiometric matrix of the $N^{\text{th}}$ network and, $\tilde{\Gamma}_{N} = (r_{m+1},\dots,r_{m+m'})$ is the $(n+n')\times m'$ matrix of new reactions. Hence, any basis of the left kernel of $\Gamma^{\text{int}}_N$ can be written as $\tilde{z}_i = (z_i\:,\:z'_i)$, where $z_i$ is an element of a basis of the left kernel of $\Gamma_N$ or zero and $z'_i$ is a vector to be determined by $\tilde{\Gamma}_N$. The assumption of equidimensionality of the left kernels of $\Gamma_N$ and $\Gamma_N^{\text{int}}$ can be satisfied only if $\text{rank}(\tilde{\Gamma}_N) = n'$ and, in the same fashion as Corollary \ref{cor:ADel}, it follows, that $Z_N$ is a submatrix of $Z_N^{\text{int}}$. Now, delete $m''$ of the first $m$ columns of $\Gamma_N^{\text{int}}$ to form $\Gamma_{N+1}$. By the equidimensionality assumption it follows, that the rows of $Z_N^{\text{int}}$ form a basis of the left kernel of $\Gamma_{N+1}$. Hence, $Z_N^{\text{int}} = Z_{N+1}$ and $Z_N$ is a submatrix of $Z_{N+1}$.
\end{proof}

From Lemma \ref{lem:ConsLaws} and Theorem \ref{thm:Agreement} it becomes apparent why (C2) and toric steady states are required. Namely, toric steady states are needed to construct a matrix $B$ and, hence, $J_N^\lambda$ and (C2) is required to ensure that the conservation relation matrices are submatrices. The proof of Theorem \ref{thm:MultiStat} is now stated.

\begin{proof}
Fix a vector $x^*\in \RR_{>0}^n$ as in Definition \ref{def:LaurentBin} and choose a basis $S$ and $n-d$ circuits $\{C_1,\dots,C_{n-d}\}$ of the PSS matroid associated to the $N^\text{th}$ member of a family such that $\bigcap_i C_i = S$. Consider the polynomial system \begin{equation}
    \overline{\Phi(C_1)}(x)=\cdots=\overline{\Phi(C_{n-d})}(x)=Z_N\cdot x-c=0. \label{eq:PhiSystem}
\end{equation} 
Further, let $B_N$ denote the exponent matrix of the binomials $\overline{\Phi(C_1)},\dots,\overline{\Phi(C_{n-d})}$. Hence, following the construction of Definition \ref{def:J} we obtain
the square matrix $$J^\lambda_N = \begin{pmatrix}Z_N\\
\left(B_N^T\right)^\lambda\end{pmatrix}.$$
By construction (see Theorem \ref{thm:Agreement}), $V(\overline{\Phi(C_1)}(x),\cdots,\overline{\Phi(C_{n-d})}(x))\cap \RR_{>0}^n \neq \emptyset$. By \cite[Theorem 2.7]{Sadeghimanesh2018} the system \eqref{eq:PhiSystem} is multistationary if and only if either $\det(J^\lambda_N) = 0$ or $\det(J^\lambda_N) \neq 0$ and the polynomial $\det(J^\lambda_N)$ in $\lambda_1,\dots, \lambda_n$ has a positive and a negative term. Suppose the $N^{\text{th}}$ network is multistationary and, by condition (C3),  $\det(J^\lambda_N) \neq 0$. Next, build the $(N+1)^\text{th}$ network by adding $n'$ new species and consider its matrix $J_{N+1}^\lambda$. To obtain this matrix consider the new circuit binomials $\overline{\Phi(C_{n-d+1})}(x)=\cdots=\overline{\Phi(C_{n+n'-d})}(x)=0$, where each circuit contains the basis $S$ and one new variable $x_{n+i}$ with its corresponding positive exponent $b_{n+i}$, where $i = 1,\dots,n'$. The exponent vectors will therefore have exactly one non-zero element (namely the positive integer $b_{n+i}$) in the last $n'$ rows. Therefore, the matrix $J_{N+1}^\lambda$ has the block form
$$
J^\lambda_{N+1} =
\left(\begin{array}{ccc}
J_N^\lambda &\vline & \begin{matrix} Z_{N+1}|_{[(1\dots d)\times(n+1\dots n+n')]}\\ 0_{(n-d)\times l}\end{matrix}\\
\hline
\left(B_{N+1}^T\right)^\lambda|_{[(n-d+1\dots n+n'-d)\times (1\dots n)]} &\vline & {\rm diag}(b_{n+1}\lambda_{n+1},\dots,b_{n+n'}\lambda_{n+n'})
\end{array}\right)
$$
where $A|_{[y_1\dots y_m\times a_1\dots a_n]}$ denotes the restriction of a matrix $A$ to the rows $y_1\dots y_m$ and columns $a_1\dots a_n$. The matrix ${\rm diag}(b_{n+1}\lambda_{n+1},\dots,b_{n+n'}\lambda_{n+n'})$ is a diagonal matrix with diagonal elements $b_{n+1}\lambda_{n+1},\dots,b_{n+n'}\lambda_{n+n'}$. Hence, the expression $T = \left(\prod_{i=n+1}^{n+n'} b_i\lambda_i\right) \det(J^\lambda_N) \neq 0$ must appear in $\det(J^\lambda_{N+1})$ and, in particular, no term in $T$ can be cancelled by any other term appearing in $\det(J^\lambda_{N+1})$. Hence, if $\det(J^\lambda_N)$ has coefficients of opposite sign so does $\det(J^\lambda_{N+1})$ and, therefore the network is multistationary.
The proof is completed by induction.
\end{proof}

\begin{remark}
Note that, since we are allowed to choose different bases of the PSS matroid, and hence, different binomial systems it may be possible to satisfy condition (C3) of Theorem \ref{thm:MultiStat} for one particular choice of basis but not for a different choice of basis.
\end{remark}

We illustrate our results using the two and three site distributive phosphorylation networks.

\begin{example}
The PSS matroid of the one-site and two-site distributive phosphorylation networks are represented by the $A$-matrices
$$
A_1 =
\begin{pmatrix}
1 &0 &0 &1 &1 &1\\
1 &1 &0 &0 &1 &1\\
0 &0 &1 &1 &1 &1\\
\end{pmatrix}\text{ and }
A_2 =
\begin{pmatrix}
1 &0 &0 &1 &1 &1 &2 &2 &2\\
1 &1 &0 &0 &1 &1 &0 &1 &1\\
0 &0 &1 &1 &1 &1 &1 &1 &1\\
\end{pmatrix},
$$
respectively. Hence, choosing a basis of $a_1 = (1,1,0)^T$, $a_2 = (0,1,0)^T$ and $a_3 = (0,0,1)^T$ we find a parameterization for the one-site network as
\begin{align*}
    x_2x_4-x^*_2x^*_4\left(x^*_1x^*_3\right)^{-1}x_1x_3 &= 0,\\
    x_5 - \left(x^*_1x^*_3\right)^{-1} x^*_5 x_1x_3 &= 0,\\
    x_6 - \left(x^*_1x^*_3\right)^{-1}x^*_6 x_1x_3 &= 0.
\end{align*}
The two-site model has three additional equations, namely
\begin{align*}
    x_2^2x_7 - \left(x^*_1\right)^{-2}\left(x^*_3\right)^{-1}\left(x^*_2\right)^2 x^*_7 x_1^2x_3 &= 0,\\
    x_2x_8 - \left(x^*_1\right)^{-2} \left(x^*_3\right)^{-1}x^*_2x^*_8 x_1^2x_3 &= 0,\\
    x_2x_9 - \left(x^*_1\right)^{-2} \left(x^*_3\right)^{-1}x^*_2x^*_9 x_1^2x_3 &= 0.
\end{align*}
Hence, we get the $B$-matrices
\begin{equation*}
B_1^T =
\begin{pmatrix}
-1 & 1 & -1 & 1 & 0 & 0\\
-1 & 0 & -1 & 0 & 1 & 0\\
-1 & 0 & -1 & 0 & 0 & 1\\
\end{pmatrix}\text{ and }\;
B_2^T =
\begin{pmatrix}
-1 & 1 & -1 & 1 & 0 & 0 & 0 & 0 & 0\\
-1 & 0 & -1 & 0 & 1 & 0 & 0 & 0 & 0\\
-1 & 0 & -1 & 0 & 0 & 1 & 0 & 0 & 0\\
-2 & 2 & -1 & 0 & 0 & 0 & 1 & 0 & 0\\
-2 & 1 & -1 & 0 & 0 & 0 & 0 & 1 & 0\\
-2 & 1 & -1 & 0 & 0 & 0 & 0 & 0 & 1
\end{pmatrix}.
\end{equation*}
The conservation relations can be chosen as
\begin{equation*}
    Z_1 =
    \begin{pmatrix}
    -1 & -1 & 1 & 1 & 0 & 0\\
    1 & 0 & 0 & 0 & 1 & 0\\
    0 & 1 & 0 & 0 & 0 & 1
    \end{pmatrix}
    \text{ and }\;
    Z_2 =
    \begin{pmatrix}
     -1 & -1 & 1 & 1 & 0 & 0 & 1 & 0 & 0\\
    1 & 0 & 0 & 0 & 1 & 0 & 0 & 1 & 0\\
    0 & 1 & 0 & 0 & 0 & 1 & 0 & 0 & 1
    \end{pmatrix}.
\end{equation*}
This leads to
$$\text{det }(J^\lambda_1) = \lambda_3 \lambda_4 \lambda_5+\lambda_1 \lambda_4 \lambda_6+\lambda_3 \lambda_4 \lambda_6+\lambda_3 \lambda_5 \lambda_6+\lambda_4 \lambda_5 \lambda_6.$$
Note that this determinant is square-free and homogeneous as investigated in \cite{Dickenstein2018} and that it has only coefficients equal to $+1$. Indeed, it is a well known fact the the one-site distributive network is monostationary \cite{Flockerzi2014}. The determinant of $J^\lambda_2$ is equal to
\begin{align*}
\text{det}(J^\lambda_2) = &-\lambda_2 \lambda_3 \lambda_4 \lambda_5 \lambda_7 \lambda_8-\lambda_1 \lambda_2 \lambda_3 \lambda_6 \lambda_7 \lambda_8-\lambda_1 \lambda_2 \lambda_4 \lambda_6 \lambda_7 \lambda_8-\lambda_1 \lambda_3 \lambda_4 \lambda_6 \lambda_7 \lambda_8-\lambda_2 \lambda_3 \lambda_4 \lambda_6 \lambda_7 \lambda_8\\
&-\lambda_2 \lambda_4 \lambda_5 \lambda_6 \lambda_7 \lambda_8+\lambda_3 \lambda_4 \lambda_5 \lambda_6 \lambda_7 \lambda_8-\lambda_1 \lambda_2 \lambda_3 \lambda_4 \lambda_5 \lambda_9-\lambda_1 \lambda_2 \lambda_3 \lambda_5 \lambda_6 \lambda_9-2 \lambda_1 \lambda_2 \lambda_4 \lambda_5 \lambda_6 \lambda_9\\
&-\lambda_2 \lambda_3 \lambda_4 \lambda_5 \lambda_6 \lambda_9+\lambda_1 \lambda_3 \lambda_4 \lambda_5 \lambda_7 \lambda_9+\lambda_1 \lambda_3 \lambda_5 \lambda_6 \lambda_7 \lambda_9+2 \lambda_1 \lambda_4 \lambda_5 \lambda_6 \lambda_7 \lambda_9+\lambda_3 \lambda_4 \lambda_5 \lambda_6 \lambda_7 \lambda_9\\
&-2 \lambda_2 \lambda_3 \lambda_4 \lambda_5 \lambda_8 \lambda_9-\lambda_1 \lambda_2\lambda_3 \lambda_6 \lambda_8 \lambda_9-2 \lambda_1 \lambda_2 \lambda_4 \lambda_6 \lambda_8 \lambda_9-\lambda_1 \lambda_3 \lambda_4 \lambda_6 \lambda_8 \lambda_9-2 \lambda_2 \lambda_3 \lambda_4 \lambda_6 \lambda_8 \lambda_9\\
&-\lambda_2 \lambda_3 \lambda_5 \lambda_6 \lambda_8 \lambda_9-2 \lambda_2 \lambda_4 \lambda_5 \lambda_6 \lambda_8 \lambda_9+ \lambda_3 \lambda_4 \lambda_5 \lambda_6 \lambda_8 \lambda_9+ \lambda_7 \lambda_8 \lambda_9 \text{ det}(J^\lambda_1),
\end{align*}
and, therefore, contains a term $T = \lambda_7\lambda_8\lambda_9\text{det}(J^\lambda_1)$. The determinant of the two-site network has coefficients of opposite signs and, hence, the network is multistationary and so are all larger networks in the family. In particular, it is known that the $N$-site distributive network has a maximum of $2N-1$ positive steady states \cite{Flockerzi2014}.
\label{ex:DistPhosMulti}
\end{example}


\section{Conclusion}\label{sec:Conclusion}

In this paper we studied families of chemical reaction networks with toric steady states, which we called toric families. First, we investigated the dimensions and parameterizations of toric steady state varieties and connected them to network properties whenever possible. In particular, if there is a finite number of positive steady states for all choices of $c_1,\dots,c_d$, then the number of conservation relations determines the dimension of the toric steady state variety and, with certain restrictions, the monomial parameterization of a chemical species $X_i$ is preserved throughout the family.

We next studied the PSS matroid defined by the parameterization of the positive steady states. In particular, we proved that the algebraic matroid defined by the binomial steady state ideal is equivalent to the PSS matroid. We showed how the PSS matroid can be decorated with binomials and how they can be used for model selection, experimental design or even parameter identification. 

The final section investigated the multistationarity structure of toric families. The main result of the section showed that, under some mild restrictions, if a member of a family is capable of multistationarity then all larger members are too.
This result was proved using the binomials constructed from the PSS matroids. 
We illustrated our results on the multisite distributive phosphorylation network.

Further research could include applying the results of this paper to other meaningful biochemical families such as different models for immune system reactions, e.g. \cite{McKeithan1995,Dushek2014}. Another direction could be to study the parameter varieties defined in this paper in the context of previous identifiability research and aim to include noisy data. 

\section*{Acknowledgements}

The authors would like to thank the American Institute of Mathematics (AIM) for initiating this collaboration, the Erwin Sch{\"o}dinger Institute, Vienna (ESI) and the Institute for Computational and Experimental Research in Mathematics (ICERM) where part of the research was carried out. MFA would like to thank the EPSRC for supporting this research through grant EP/G03706X/1 and Merton College Oxford.
MH was partially supported by the Independent Research Fund of Denmark. 
The authors would like to than Elisenda Feliu and Heather Harrington for helpful comments on the manuscript.

\section{Appendix A: Additional Proofs}

\subsection{Proof of Lemma \ref{lem:SatEval}}
\begin{proof}
	Note that $\overline{V(L)-V(x_1\cdots x_{n+n'})}=V(L)$.
	Hence,
	\begin{align*}
	 \overline{X_{N+1}\cap V(L) - V(x_1\cdots x_{n+n'})}   &=\overline{(X_{N+1}-V(x_1\cdots x_{n+n'}))\cap (V(L)-V(x_1\cdots x_{n+n'}))}\\
	 &=\overline{(X_{N+1}-V(x_1\cdots x_{n+n'}))}\cap\overline{(V(L)-V(x_1\cdots x_{n+n'}))}\\
	 &=\overline{(X_{N+1}-V(x_1\cdots x_{n+n'}))}\cap V(L),
	\end{align*}
	which proves the first statement. For the second statement note that, by construction of the evaluation map, $X_{N+1}\cap V(L) = X_N \subset \CC^m_\kappa\times\CC^n_x\times\CC^{n'}$. Hence,
	\begin{align*}
	\overline{X_{N+1}-V(x_1\cdots x_{n+n'})}\cap V(L) &= \overline{X_{N+1}\cap V(L) - V(x_1\cdots x_{n+n'})}\\
	&= \overline{X_N - V(x_1\cdots x_{n+n'})}\\
	&= \overline{X_N - V(x_1\cdots x_n)}.
	\end{align*}
\end{proof}

\subsection{Proof of Proposition \ref{propn:BinId}}
\begin{proof}
By assumption $I_{N+1}:(x_1\cdots x_{n+n'})^\infty$ is a binomial ideal, hence, $$
I_{N+1}:(x_1\cdots x_{n+n'})^\infty=(\Xi_{1}^+x^{b_1^+}-\Xi_{1}^-x^{b_1^-}, \dots, \Xi_{\nu}^+x^{b_\nu^+}-\Xi_{\nu}^-x^{b_\nu^-}),
$$ where $\Xi_j^\pm$ are polynomial functions of the rate constants. 
For each term $\beta_j=\Xi_{j}^+x^{b_j^+}-\Xi_{j}^-x^{b_j^-}$ it holds that $\pi(\beta_j)=\pi(\Xi_{j}^+)x^{b_j^+}-\pi(\Xi_{j}^-)x^{b_j^-}$, hence either:\begin{itemize}
\item $\pi(\beta_j)=0$, this happens if $\pi(\Xi_{j}^+)=\pi(\Xi_{j}^-)=0$;
\item $\pi(\beta_j)$ is a monomial, this happens if one of $\pi(\Xi_{j}^+)$ or $\pi(\Xi_{j}^-)$ evaluates to zero and the other does not;
\item $\pi(\beta_j) = \beta'_j$ and still binomial, this happens when both $\pi(\Xi^+_j)\neq 0$ and $\pi(\Xi^-_j)\neq 0$.
\end{itemize}
If $\pi(\beta_j)$ is a monomial for any $j$ then the ideal $\pi(I_{N+1}):(x_1\cdots x_{n+n'})^\infty=(1)$, otherwise, it must be a binomial ideal generated by the $\beta_j$ such that $\pi(\beta_j)=\beta'_j$. By Lemma \ref{lem:SatEval} this ideal is the binomial ideal $I_N:(x_1\cdots x_{n})^\infty$.

Since for a fixed choice of rate constants $I_N:(x_1\cdots x_{n})^\infty$ is a binomial ideal in $\RR[x_1,\dots, x_n]$ and it is assumed that $V(I_N)$ has a nonempty intersection with the positive orthant $\RR_{>0}^n$ then the last statement follows by \cite[Proposition 5.2]{conradi2018multistationarity}. This Proposition states that if $I$ is a binomial ideal with variety $V(I)$ at most one of its irreducible components intersects $\RR^n_{>0}$.
\end{proof}

\subsection{Proof of Proposition \ref{thm:Submatrix}}
\begin{proof}
First statement (i) is proved. By assumption $I_{N+1}:(x_1\cdots x_{n+n'})^\infty$ is a binomial ideal, hence, $$
I_{N+1}:(x_1\cdots x_{n+n'})^\infty=(\Xi_{1}^+x^{b_1^+}-\Xi_{1}^-x^{b_1^-}, \dots, \Xi_{\nu}^+x^{b_\nu^+}-\Xi_{\nu}^-x^{b_\nu^-}).
$$
By Proposition \ref{propn:BinId} it follows that if $I_N:(x_1\cdots x_n)^\infty$ is not empty, then it is also a binomial ideal and its set of generators must appear in $\pi(\Xi_{1}^+x^{b_1^+}-\Xi_{1}^-x^{b_1^-}, \dots, \Xi_{\nu}^+x^{b_\nu^+}-\Xi_{\nu}^-x^{b_\nu^-})$. Hence, its matrix of exponents can be obtained by choosing  a subset (of size $\mu \leq \nu$) of the exponent vector pairs $({b_1^+}, {b_1^-}),\dots,({b_\nu^+}, {b_\nu^-}) $. By definition of the matrix $B$, it follows that $B_N$ is a submatrix of $B_{N+1}$, proving (i).

Note that no variables in $x_{n+1},\dots,x_{n+n'}$ appear in $I_N$ and, hence, the subset of exponent vectors $b_i = b_i^+-b_i^-$ defining $I_N:(x_1\cdots x_{n})^\infty$ have the general form $b_i = (b_{i_1},\dots,b_{i_n},0,\dots,0)^T$. 
Further, the entries of the remaining $\nu-\mu$ exponent vectors have at least one non-zero entry in the last $n'$ rows and these vectors can be collected into a matrix $\tilde{B}$.
Hence, $B_{N+1}$ may be written in block form as
$$
B_{N+1} = \left(
\begin{array}{cc}
\begin{array}{c|} B_N\\ \hline 0_{n'\times \mu} \end{array} & \tilde{B}
\end{array}\right).
$$
It is clear that the first $n$ entries of any basis vector of the left kernel of $B_{N+1}$ must form a vector in the left kernel of $B_N$. Hence, the left kernel of $B_{N+1}$ has a basis of the form $\tilde{a}_i = (a_i,\bar{a}_i)$, where $a_i$ is an element of a given basis of the left kernel of $B_N$ or the zero vector and $\bar{a}_i\in\ZZ^{n'}$ is a vector of variables to be determined. From the rank condition it is clear that the matrix $\tilde{B}$ has a rank of at most $n'$ and, for any given $i$, a system of linear equations is obtained, $\tilde{a}_i\tilde{b}=0$, for all columns $\tilde{b}$ of $\tilde{B}$. Due to the rank condition, this system has always at least one (rational) solution and, hence,  the collection of all solutions form a basis of the left kernel of $B_{N+1}$  to give the matrix $A_{N+1}$ (after clearing the denominators by multiplications of rows with scalars). Therefore, $A_N$ can be chosen as a submatrix of $A_{N+1}$, proving (ii).
\end{proof}

\subsection{Proof of Theorem \ref{thm:AFamSub}}
\begin{proof}
Part (i) follows immediately from Theorem \ref{thm:Submatrix}.
To prove (ii), suppose $I_N$ and $I_{N+1}$ are the (binomial) steady state ideals of two members of a family and $I'_{N+1}$ is the steady state ideal of their intermediate network. It follows from Theorem \ref{thm:Submatrix} that the $A$-matrices of $I_N$ and $I_{N+1}$ are submatrices of the $A$-matrix of the intermediate. Each column of the $A$-matrix corresponds to a variable $x_i$. Since both $I'_{N+1}$ and $I_{N+1}$ are ideals with species $\{x_1,\dots, x_{n+n'}\}$ then their associated $A$-matrices have the same number of columns. Further, by assumption, the number of conservation relations stays constant. Therefore, by Proposition \ref{prop:DimAndConserveRelations} the $A$ matrices associated to $I'_{N+1}$ and $I_{N+1}$ are the same, since the ideals have the same dimension, meaning the $A$-matrices must have the same number of rows. Hence, the $A$-matrices associated to $I_N$ and $I_{N+1}$ are submatrices; i.e. $A_N$ is a submatrix of $A_{N+1}$.
\end{proof}

\subsection{Proof of Theorem \ref{thm:MatroidHom}}
\begin{proof}
The variables of the ground set of $\mathcal{M}(A)$ have algebraic dependencies as defined in Proposition \ref{propn:PSSmatroid}. Hence, their algebraic dependencies can be found by solving the implicitization problem $I = J \cap \RR(\kappa_1,\dots,\kappa_m)[x_1,\dots,x_n]$ where $J=\langle x_1-x^*_1t^{a_1},\dots,x_n-x^*_nt^{a_n}\rangle$. However, the same ideal defines the implicit equations of a toric variety defined by $\psi_A: t\to (x^*_1t^{a_1},\dots,x^*_nt^{a_n})$. Hence, the algebraic relations between the monomials $\phi_A(x)$ are identical to the algebraic relations defined by the binomial ideal $I_b$. This implies that $\mathcal{M}(A) = \mathcal{M}(I_b)$.
\end{proof}

\bigskip

\begin{small}

\bibliographystyle{plain}

\end{small}
\end{document}